\documentclass[a4paper]{article}
\usepackage[margin=1.25in]{geometry}
\usepackage{amsmath,amsthm,amssymb,esint}
\usepackage{xparse}
\usepackage{graphicx}
\usepackage{braket}
\usepackage{amsthm}
\usepackage[usenames,dvipsnames,svgnames,table]{xcolor}
\usepackage[colorlinks = true, linkcolor = blue, urlcolor  = blue, citecolor = violet]{hyperref}

\newcommand{\fonci}{{F}_{i}}
\newcommand{\foncj}{{F}_{j}}
\newcommand{\fonc}{{F}}

\newcommand{\numset}[1]{\mathbf{#1}}
	\newcommand{\cc}{\numset{C}}
	\newcommand{\rr}{\numset{R}}
	
	\newcommand{\zz}{\numset{Z}}
	\newcommand{\nn}{\numset{N}}

	\newcommand{\one}{\bm{1}}
	
		\newcommand{\Exp}[1]{\mathrm{e}^{#1}}
	\newcommand{\ii}{\mathrm{i}}
	
	\renewcommand{\Re}{\operatorname{Re}}

	\makeatletter
	\providecommand*{\diff}%
		{\@ifnextchar^{\DIfF}{\DIfF^{}}}
	\def\DIfF^#1{%
		\mathop{\mathrm{\mathstrut d}}%
			\nolimits^{#1}\gobblespace}
	\def\gobblespace{%
		\futurelet\diffarg\opspace}
	\def\opspace{%
		\let\DiffSpace\!%
		\ifx\diffarg(%
			\let\DiffSpace\relax
		\else
			\ifx\diffarg[%
				\let\DiffSpace\relax
			\else
				\ifx\diffarg\{%
					\let\DiffSpace\relax
				\fi\fi\fi\DiffSpace}

	\providecommand*{\od}[3][]{%
		\frac{\diff^{#1}#2}{\diff #3^{#1}}}
	

	\renewcommand{\d}{\diff}


	
	

	
	\DeclareMathOperator{\spr}{spr}
	\DeclareMathOperator{\supp}{supp}

	
	
	\DeclareMathOperator{\tr}{tr}

	
	\newcommand{\h}{\mathfrak{h}}
	\newcommand{\B}{\mathcal{B}}
			\newcommand{\cB}{\mathcal{B}}
	
	

	\newcommand{\CAR}{\operatorname{CAR}}


	\newcommand{\Ga}[1][]{\Gamma^{-}_{#1}}
	
	\newcommand{\G}[1][]{\Gamma_{#1}\!}
	\newcommand{\dG}[1][]{\diff\Gamma_{#1}}


	

	\newcommand{\slim}{\operatorname{s-lim}\limits}

\newcommand{\sys}{\mathcal{S}}
\newcommand{\env}{\mathcal{E}}



\DeclareDocumentCommand \envstate { o } {%
  \IfNoValueTF {#1} {%
    \xi%
  }{%
    \xi_{#1}%
  }%
}
\DeclareDocumentCommand \sysstate { o } {%
  \IfNoValueTF {#1} {%
    \rho%
  }{%
    \rho_{#1}%
  }%
}

\renewcommand{\one}{\mathbf{1}}

\newcommand{\MM}{{M}}
\newcommand{\MMs}{{M}^*}

\newcommand{\ci}{\mathbf{S}^1}
\renewcommand{\h}{\mathcal{H}}

\renewcommand{\Ga}[1][]{\Gamma^{-}_{#1}}

\renewcommand{\G}[1][]{\Gamma_{#1}}
\renewcommand{\dG}[1][]{\diff\Gamma_{#1}}

\newtheorem{corollary}{Corollary}[section]
\newtheorem{proposition}[corollary]{Proposition}
\newtheorem{lemma}[corollary]{Lemma}
\newtheorem{theorem}[corollary]{Theorem}

\newtheorem{remark}[corollary]{Remark}
\newtheorem{example}[corollary]{Example}
\begin{document}

\title{On Fermionic walkers interacting with a correlated structured environment}


\author{Renaud Raqu\'epas}


\date{}

\maketitle
{
\begin{center}\small
	\begin{tabular}{c c c c}
		Univ.\ Grenoble Alpes & &&  McGill University\\
		CNRS, Institut Fourier & && Dept.\ of Mathematics and Statistics\\
		F-38\,000 Grenoble  & && 1005--805 rue Sherbrooke Ouest \\
		France & &&  Montr\'eal (Qu\'ebec){\,\ }H3A 0B9 \\
	\end{tabular}
\end{center}
}
\smallskip

\begin{abstract}
  We study the large-time behaviour of a sample~$\sys$ consisting of an ensemble of fermionic walkers on a graph interacting with a structured infinite reservoir of fermions~$\env$ through an exchange of particles in preferred states. We describe the asymptotic state of~$\sys$ in terms the initial state of~$\env$, with especially simple formulae in the limit of small coupling strength. We also study the particle fluxes into the different parts of the reservoir.

	\smallskip
	\noindent
	\textbf{AMS subject classification:} 81Q80 $\cdot$ 81S22 $\cdot$ 82C10
\end{abstract}

\section{Introduction}

At times arising as successful approximations of continuous-time quantum evolutions that are of interest to experimental and theoretical physicists, at times arising as natural quantum counterparts of classical discrete-time processes in different sciences, discrete-time quantum evolutions have become an extensively studied topics in mathematical physics. Examples of the first kind appear in the effective description of quantum systems that repeatedly interact with probes~\cite{KM00,BJM14}, quantum systems that undergo time-periodic driving \cite{Ho74,Ya77} or models related to the quantum Hall effect, where strong perpendicular magnetic fields are involved~\cite{CC88,KOK05}; examples of the second kind can be found in quantum information science~\cite{Me96,Wa01}.

Discrete-time quantum evolutions where the dynamics for a single time step only couples {neighbouring} sites of a (possibly infinite) graph are often referred to as \textit{quantum walks} and have been studied extensively in the last twenty years; see for example~\cite{AAKV01,Ke03,Ve12,Por}. Most works on the subject consider a single particle, called a \emph{quantum walker}. However, interesting phenomena arise when two walkers are coupled~\cite{AA+12,SB+17} and natural questions concerning the collective behaviour of a variable number of walkers arise, especially by analogy with phenomena of Hamiltonian quantum statistical mechanics such as return to equilibrium, existence of nonequilibrium steady states, entropy production, {et cetera}. This was initiated by Hamza and Joye in the work~\cite{HJ17}, where they prove a form of return to equilibrium for ensembles of walkers interacting with a chain of auxiliary fermions in two special cases of the model considered here.

The main model under consideration concerns a finite graph on which a variable number of noninteracting fermions may hop to a neighbouring vertex at discrete times. We call this component of the system the sample~$\sys$. The free dynamics in Fock space is given by the second quantization of a one-particle unitary matrix~$W$.
To model the interaction with an environment, we introduce an auxiliary bi-infinite chain~$\env$ of sites where fermions are forced to hop to their left at discrete times. The free dynamics there is thus described by the second quantization of a shift operator~$S$.
These two components of the system then interact through a term which allows the exchange of particles in preferred states. The intensity of this exchange is controlled by a coupling constant~$\alpha$.
They undergo a step of free evolution, a step of interaction evolution, a step of free evolution, and so on.

We consider the case in which the environment~$\env$ is in an initial state which is translational-invariant, gauge-invariant (\textsc{gi}) and quasi-free (\textsc{qf}), described at the one-particle level by a function~$F$ of the shift operator~$S$.
Most notably, we show that the asymptotic state in the sample~$\sys$ is then also \textsc{giqf} and completely described by its own dynamics~$W$ and the same function~$F$, up to an interaction-dependent deformation which vanishes in the limit~$\alpha \to 0$. This deformation is explicit.
This is a generalization of the results of Hamza and Joye in~\cite{HJ17}: they had covered the cases where the sites in the environment~$\env$ are uncorrelated or where the free hopping in the sample~$\sys$ is a uni-directional shift on a ring.

We further generalize to more complicated structured environments by considering~$m \geq 1$ internal degrees of freedom at each site of the bi-infinite chain and a dynamics determined at the one-particle level by $S \otimes U$ for some unitary $m$-by-$m$ matrix~$U$ with simple eigenvalues.
Invariance of the state under the dynamics and under translations along the chain yield a set $(\pi_i)_{i=1}^m$ of rank-one projections such that the symbol describing the initial state of the enrivonment~$\env$ is $\sum_{i=1}^{m}(2\Re\fonci(S^*\otimes U^*))(\one \otimes \pi_i)$ for some family $(\fonci)_{i=1}^m$ of functions.
In this generalized version of the model, the asymptotic state in the sample~$\sys$ is again \textsc{giqf} and completely described by its own dynamics~$W$ and the functions~$(\fonci)_{i=1}^m$\,---\,again up to an explicit $\alpha$-dependent correction\,---\,; see Theorem~\ref{thm:q-f}. We can also study the flux of particles into the different parts~$(\env_i)_{i=1}^m$ of the environment~$\env$ and determine\,---\,for small enough couplings\,---\,the signs by comparing the values of the functions~$(\fonci)_{i=1}^{m}$ at the point~$1$.

In Section~\ref{sec:sys-dyn}, we introduce more precisely the description of the spaces, observables and dynamics; we leave some comments on the choice of quantum statistics and interaction for Appendix~\ref{sec:stats}.
We devote Section~\ref{sec:state} to a discussion of the initial state of the system. This is where we introduce the decomposition of the translation-invariant  environment~$\env$ into different parts~$(\env_i)_{i=1}^m$, to each of which is associated a scalar function~$\fonci$.
In Section~\ref{sec:asymptotics}, we state our main results on the asymptotic state in the sample and asymptotic fluxes out of the different parts of the environment.
We present more concrete examples with $m=1$ in Section~\ref{sec:ex}.
For a model on a ring, we get an explicit expression for the profile of the particle density as a function of the node of the graph and for the corresponding correlations between the occupation of two nodes. We also study the role of disorder and of the essential range of the function~$\fonc$ in the limit of an infinitely large sample.

\paragraph*{Acknowledgements}

The author would like to thank Alain Joye for introduction to these questions. The author acknowledges financial support from the Natural Sciences and Engineering Research Council of Canada (NSERC), and from the French Agence Nationale de la Recherche through grant ANR-17-CE40-0006.

\section{The system and its dynamics} \label{sec:sys-dyn}

\subsection{A quantum walk in a sample}

In this subsection, we introduce the description of the sample, which is the small system of interest interacting with an environment. We start by considering a single walker for two classes of graphs, and then introduce our general assumptions and the passage to a variable number of fermionic walkers.

\subsubsection{The coined walk on a cycle}

We wish to start by describing the motion of a spin-$\tfrac 12$ quantum walker on a cycle of~$n$ vertices. Let $\{\delta_0, \delta_1, \dotsc, \delta_{n-1}\}$ be an orthonormal basis for~$\cc^n$, the Hilbert space for the position, and let $\{e_{-1}, e_{+1}\}$ be an orthonormal basis for~$\cc^2$, the Hilbert space for the spin. The unitary
\[
	W_1 := \sum_{\nu=0}^{n-1} \sum_{\tau = \pm 1} \delta_{\nu + \tau} \otimes e_{\tau} \braket{\delta_\nu \otimes e_\tau  ,{\cdot\,}},
\]
on $\cc^n \otimes \cc^2$, with the arithmetics on the index~$\nu$ understood modulo~$n$, can be interpreted as follows. If a walker is located on the site~$\nu$ with its spin up [resp. down], it moves clockwise [resp. counter-clockwise] and its spin variable is unchanged. A second unitary
\[
	W_2 := \sum_{\nu = 0}^{n-1} \delta_\nu \braket{\delta_\nu,{\cdot\,}} \otimes C_\nu,
\]
where each $C_\nu$ is a $2$-by-$2$ unitary called a coin matrix, is used to locally change the spin variable. The so-called Hadamard coin~$\tfrac{1}{\sqrt 2}(\begin{smallmatrix} 1 & 1 \\ -1 & 1 \end{smallmatrix})$ is most often considered in the literature.

The product
\[
  W := W_1 W_2,
\]
a unitary on the finite dimensional Hilbert space $\h_\sys = \cc^n \otimes \cc^2$, gives a dynamics for a single time step where a quantum coin is shuffled, and the walker hops to a neighbouring site or the other depending on the outcome of the coin.

If each coin matrix~$C_\nu$ is such that $\braket{e_{+1}, C_\nu e_{+1}}\braket{e_{+1},C_\nu e_{-1}} \neq 0$, then the vector $\psi_* = \delta_0 \otimes e_{-1}$ is cyclic for~$W$. Such a preferred state will be used later to introduce exchange of particles with the environment, and cyclicity will ensure good propagation of the interaction.

\subsubsection{Coined walks on some more general graphs}

We now treat a second class of graphs, called class-1 regular graphs in~\cite{Por}, for which we can give a precise description of the dynamics without introducing too much machinery from graph theory.

Let~$G$ be a finite $r$-regular graph with~$n$ vertices whose edges can be coloured with~$r$ colours, $a = 0, \dotsc, r-1$, and consider the unitary
\[
	W_1 := \sum_{\nu = 0}^{n-1} \sum_{a = 0}^{r-1} \delta_{\nu'(\nu,a)} \otimes e_a \braket{\delta_{\nu} \otimes  e_a,{\cdot\,}}
\]
on $\cc^n\otimes\cc^r$, where $\nu'(\nu,a)$ is the unique vertex such that~$(\nu,\nu')$ is an edge of colour~$a$. By a standard result, the colouring property forces~$n$ to be even. The action of~$W_1$ is interpreted as follows. If a walker is at the node~$\nu$, looking along the edge of color~$a$, after one time step, it will go to the node~$\nu'$ at the other end of this edge, now facing where it came from.

We also introduce a unitary coin matrix~$C_\nu$ at each node~$\nu$ and the unitary
\[
	W_2 := \sum_{\nu = 0}^{n-1} \delta_\nu \braket{\delta_\nu,{\cdot\,}} \otimes C_\nu.
\]
on $\cc^n\otimes\cc^r$. The role of the coin matrix is to change the direction in which the walker is looking into a normalized superposition of directions.

Finally, we set
\[
	W := W_1 W_2.
\]
This unitary matrix~$W$ on the finite-dimensional Hilbert space~$\h_\sys = \cc^n \otimes \cc^r$ of dimension~$d = nr$ describes the discrete-time dynamics for a single walker on the graph~$G$. Under generic assumptions on the coin matrices, the matrix~$W$ will admit a cyclic vector~$\psi_*$.

\subsubsection{General formulation and second quantization}

Quantum walks on more general classes of finite graphs can be given a similar description using slightly heavier machinery from graph theory; see for example Chapter~7 in~\cite{Por}. Any description of a quantum walk on a (directed) graph~$G = (V,E)$ should involve a unitary~$W$ on a space of the form $\cc^{|V|} \otimes \cc^r$ whose matrix elements $\braket{\delta_{\nu'} \otimes w, W(\delta_\nu \otimes w)}$ vanish unless~$(\nu,\nu')$ is in the set~$E$ of (directed) edges of~$G$.

The specific structure of~$W$ inherited from the graph is irrelevant for most of our computations and results and will only be used in the interpretation of some asymptotic results, especially in Section~\ref{sec:ex}. Hence, for the rest of the paper we will consider that the one-particle dynamics~$W$ on a $d$-dimensional Hilbert space~$\h_\sys$ and the unit vector~$\psi_* \in \h_\sys$ are admissible whenever~$W$ is unitary and~$\psi_*$ is cyclic for~$W$.

A variable number of non-interacting fermionic walkers are then described on the algebra~$\CAR(\h_\sys)$ of canonical anti-commutation relations, which we represent on the fermionic Fock space~$\Ga(\h_\sys)$, with the creation [resp. annihilation] operator associated to the one-particle state~$\psi \in \h_\sys$ denoted by~$a^*(\psi)$ [resp.~$a(\psi)$]. We have the usual norm identity $\|a^\sharp(\psi)\| = \|\psi\|$ for each $\psi \in \h_\sys$. Here and in what follows,~$a^\sharp$ is used as a placeholder for either~$a^*$ or~$a$.

The discrete-time dynamics there is given by
\[
	a^\sharp(\psi) \mapsto a^\sharp(W^*\psi)
\]
for all~$\psi \in \h_\sys$, extended by linearity to~$\CAR(\h_\sys)$.  We refer to this as the free dynamics in the \emph{sample}~$\sys$.

\subsection{Interaction with an environment}

To model interaction of the sample~$\sys$ of interest with an environment~$\env$, we introduce the Hilbert space~$\h_\env := \ell^2(\zz) \otimes \cc^m$ and the one-particle dynamics $S \otimes U$ there, where~$S$ is the shift $\delta_{\ell} \mapsto \delta_{\ell - 1}$ on~$\ell^2(\zz) = \{\sum_{\ell \in \zz} a_\ell \delta_\ell : \{a_\ell : \ell \in \zz\}\subset \cc, \sum_{\ell \in \zz} |a_\ell|^2 < \infty\}$ and~$U$ is an arbitrary unitary on~$\cc^m$ with simple eigenvalues.

A variable number of walkers in this environment are then described on the algebra~$\CAR(\h_\env)$ with discrete-time dynamics given by
\[
	b^\sharp(\varphi) \mapsto b^\sharp((S^* \otimes U^*)\varphi)
\]
for all~$\varphi \in \h_\env$, extended by linearity to~$\CAR(\h_\env)$.\footnote{We represent the CAR in the environment~$\env$ on~$\Ga(\h_\env)$ and use~$b^*(\varphi)$ [resp.~$b(\varphi)$] for the creation [resp.~annihilation] operator associated to the vector~$\varphi \in \h_\env$. We use $b^\sharp$ as a placeholder for either~$b^*$ or~$b$.}
We refer to this as the free dynamics in the \emph{environment}~$\env$.

Recall that we assume that the unitary~$W$ used to describe the free dynamics in the sample~$\sys$ admits a cyclic vector $\psi_* \in \h_\sys$. We use this vector to construct the interaction unitary
\[
		K_\alpha := \exp[-\ii\alpha(b^*(\delta_0\otimes v) \otimes a(\psi_*) + b(\delta_0\otimes v) \otimes a^*(\psi_*))]
\]
on~$\CAR(\h_\env) \otimes \CAR(\h_\sys)$, where $v \in \cc^m$ is a unit vector and where $\alpha \in \rr \setminus \pi\zz$ is a coupling constant. By considering this interaction on the algebra~$\CAR(\h_\env) \otimes \CAR(\h_\sys)$, we are specifying that the fermions in the sample and in the environment are distinguishable. Folklore suggests that such details of the description of the environment should not influence the asymptotics in the sample~$\sys$; see Appendix~\ref{sec:stats} for a further discussion. We will often omit tensored identities and tensor products between $a^\sharp$s and $b^\sharp$s.

To the unitary~$K_\alpha$ is closely related the rank-one operator
\begin{align*}
	\iota : \h_\sys &\to \h_\env \\
		\psi &\mapsto \delta_0 \otimes v\braket{\psi_*,\psi}
\end{align*}
and its adjoint~$\iota^*$. We introduce the shorthands~$P$ for the projector~$\iota^*\iota$ on~$\h_\sys$ and~$Q$ for the projector~$\iota\iota^*$ on~$\h_\env$.

We consider the following coupled discrete-time Heisenberg dynamics:
\begin{align*}
	\tau(X) :=
		 K_\alpha^* (\Gamma(S^* \otimes U^*) \otimes \Gamma(W^*)) X (\Gamma(S \otimes U) \otimes \Gamma(W)) K_\alpha
\end{align*}
for all observables~$X$ in $\CAR(\h_\env) \otimes \CAR(\h_\sys)$. This describes a dynamics where both the sample and the environment first evolve independently, and then are allowed to interact through a term which exchanges particles in the states $\psi_* \in\h_\sys$ and $\delta_0 \otimes v \in \h_\env$. This dynamics preserves the total number of fermions and the strength of the interaction is controlled by the coupling constant~$\alpha$.

	The joint dynamics is given by a product of unitaries which cannot be written as the exponential of a a sum of physically significant free Hamiltonians and an interaction potential. For this reason, many of the tools and definitions from Hamiltonian quantum statistical mechanics\,---\,most notably the notion of temperature\,---\,are not available.

We have the following lemmas, which will serve as building blocks for our computation of the evolution of observables of interest. As can be seen from the formulae, the case $\alpha \in \pi\zz$ is in some sense trivial and we later exclude it from our analysis.

\begin{lemma}
	For all $\psi \in \h_\sys$
	\begin{equation}\label{eq:one-a-KK}
		K_\alpha a^\sharp(\psi) K_\alpha = a^{\sharp}((\one + (\cos\alpha-1)P)\psi) - \ii\sin\alpha\,b^\sharp(\iota \psi),
	\end{equation}
	and for all $\varphi \in \h_\env$
	\begin{equation}\label{eq:one-b-KK}
		K_\alpha b^\sharp(\varphi) K_\alpha = b^{\sharp}((\one + (\cos\alpha-1)Q)\varphi) - \ii\sin\alpha\,a^\sharp(\iota^* \varphi).
	\end{equation}
\end{lemma}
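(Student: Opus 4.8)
The plan is to treat both identities at once. Set
\begin{equation*}
  G := b^*(\delta_0\otimes v)\,a(\psi_*) + b(\delta_0\otimes v)\,a^*(\psi_*),
\end{equation*}
a bounded self-adjoint operator (the CAR operators obey $\|a^\sharp(\psi)\| = \|\psi\|$), so that $K_\alpha = \exp(-\ii\alpha G)$. The crucial observation is that in $K_\alpha X K_\alpha$ the two factors $K_\alpha$ carry the \emph{same} sign, rather than forming a conjugation $K_\alpha X K_\alpha^*$. Since $G$ commutes with $K_\alpha$, differentiating in $\alpha$ gives
\begin{equation*}
  \frac{\mathrm{d}}{\mathrm{d}\alpha}\bigl(K_\alpha X K_\alpha\bigr) = -\ii\,\{G,\, K_\alpha X K_\alpha\}, \qquad \{A,B\} := AB+BA.
\end{equation*}
Thus the governing object is the \emph{anti}commutator $\{G,\,\cdot\,\}$, and I would read the right-hand sides of \eqref{eq:one-a-KK} and \eqref{eq:one-b-KK} as the value at $\alpha$ of the unique solution of this linear ODE with datum $X = a^\sharp(\psi)$, respectively $X = b^\sharp(\varphi)$.

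First I would compute the generating brackets. Because the fermions of the sample and of the environment are distinguishable, every $a^\sharp$ commutes with every $b^\sharp$, so only the CAR within each factor contribute; using $\{a(\psi_*),a^*(\psi)\} = \braket{\psi_*,\psi}\one$, $\{a^*(\psi_*),a^*(\psi)\}=0$ and their $b$-analogues, one finds
\begin{equation*}
  \{G, a^\sharp(\psi)\} = b^\sharp(\iota\psi), \qquad \{G, b^\sharp(\varphi)\} = a^\sharp(\iota^*\varphi),
\end{equation*}
with matching decorations $\sharp$ on the two sides; for the annihilation operators the antilinearity is compensated by that of the inner product, so the formulae hold uniformly in $\sharp$. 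I would emphasize that it is precisely the anticommutator structure of the ODE that meshes with the anticommutation relations of the CAR and makes these brackets close, each returning a single creation or annihilation operator rather than a higher-order expression; the conjugation $K_\alpha X K_\alpha^*$, governed by $[G,\,\cdot\,]$, would not.

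It then remains to exponentiate $-\ii\alpha\{G,\,\cdot\,\}$. Using $\iota^*\iota = P$, $\iota\iota^* = Q$ and the easily checked identities $\iota P = \iota$, $\iota^* Q = \iota^*$, iterating the two brackets shows that $\{G,\,\cdot\,\}$ preserves the two-dimensional subspace $\operatorname{span}\{a^\sharp(P\psi),\, b^\sharp(\iota\psi)\}$, acting there as the swap $\bigl(\begin{smallmatrix}0&1\\1&0\end{smallmatrix}\bigr)$, while it annihilates $a^\sharp((\one-P)\psi)$ since $\iota(\one-P)=0$. As this swap squares to the identity, $\exp(-\ii\alpha\{G,\,\cdot\,\})$ equals $\cos\alpha$ on the diagonal and $-\ii\sin\alpha$ off it, so that $K_\alpha a^\sharp(P\psi)K_\alpha = \cos\alpha\,a^\sharp(P\psi) - \ii\sin\alpha\,b^\sharp(\iota\psi)$; adding the invariant piece $a^\sharp((\one-P)\psi)$ yields \eqref{eq:one-a-KK}. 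Identity \eqref{eq:one-b-KK} follows identically from the symmetric subspace $\operatorname{span}\{b^\sharp(Q\varphi),\, a^\sharp(\iota^*\varphi)\}$. Norm convergence of the exponential series is immediate since $\{G,\,\cdot\,\}$ is a bounded superoperator.

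The calculation is routine once this framework is set up; the one genuine subtlety, and the step I would watch most carefully, is recognizing that the same-sign placement of $K_\alpha$ forces an anticommutator rather than a commutator, so that the CAR rotate the $a$-sector and the $b$-sector into one another. The only other place a sign could slip is the antilinearity bookkeeping in the annihilation case.
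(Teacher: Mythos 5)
Your proof is correct and takes essentially the same route as the paper: the paper also writes $K_\alpha = \exp(-\ii\alpha G)$, differentiates in $\alpha$ to produce the anticommutator (in the form $-\ii K_\alpha\{G,\,\cdot\,\}K_\alpha$, equivalent to yours since $G$ commutes with $K_\alpha$), and exploits exactly the same closed brackets $\{G,a^\sharp(\psi)\}=b^\sharp(\iota\psi)$, $\{G,b^\sharp(\varphi)\}=a^\sharp(\iota^*\varphi)$ together with $\iota\iota^*\iota=\iota$. The only difference is presentational: the paper resums the Taylor series of all derivatives at $\alpha=0$ via real-analyticity, whereas you package the same computation as a linear ODE exponentiated on the invariant span of $a^\sharp(P\psi)$ and $b^\sharp(\iota\psi)$.
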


\begin{proof}
	The functions $\alpha \mapsto K_\alpha b^\sharp(\varphi) K_\alpha$ and $\alpha \mapsto K_\alpha a^\sharp(\psi) K_\alpha$ are real-analytic and hence completely determined by their derivatives in~$\alpha = 0$. Both formulae in the lemma are obtained by differentiating and exploiting anticommutation relations. We omit the subscript~$\alpha$ to lighten the notation.

	Indeed, differentiating,
	\begin{align*}
			\od{}{\alpha} K a^*(\psi) K &= -\ii K \{ b(\delta_0 \otimes v)a^*(\psi_*) +  b^*(\delta_0\otimes v)a(\psi_*),a^*(\psi)\} K \\
				&= -\ii K \braket{\psi_*,\psi}b^*(\delta_0 \otimes v) K \\
				&= -\ii K b^*(\iota \psi) K.
	\end{align*}
	Similarly,
	\[
		\od{}{\alpha} K b^*(\varphi) K = -\ii K a^*(\iota^* \varphi) K.
	\]
	As a consequence of the definition of $P = \iota^*\iota$ and the identity $\iota \iota^*\iota = \iota$, for all $n\geq 0$,
	\begin{align*}
	 	\od[2n+1]{}{\alpha} K a^*(\psi) K &= (-\ii)^{2n+1} K b^*(\iota\psi) K, \\
	 	\od[2n+2]{}{\alpha} K a^*(\psi) K &= (-\ii)^{2n+2} K a^*(P\psi) K
	 \end{align*}
	so that
	\[
		K a^*(\psi) K = a^*(\psi) + \sum_{n \geq 0} \frac{(-\ii\alpha)^{2n+1}}{(2n+1)!} b^*(\iota\psi) + \frac{(-\ii\alpha)^{2n+2}}{(2n+2)!} a^*(P\psi).
	\]
	The other formula is obtained similarly.
\end{proof}

\begin{lemma}\label{lem:one-tau}
	For all $\psi_1, \psi_2 \in \h_\sys$
	\begin{equation}\label{eq:one-aa-tau}
	\begin{split}
		\tau(a^{\sharp_1}(\psi_1)a^{\sharp_2}(\psi_2))
      & = a^{\sharp_1}((\one + (\cos\alpha-1)P)W^*\psi_1)a^{\sharp_2}((\one + (\cos\alpha-1)P)W^*\psi_2) \\
			&\qquad + \ii\sin\alpha\,a^{\sharp_1}((\one + (\cos\alpha-1)P)W^*\psi_1) b^{\sharp_2}(\iota W^*\psi_2)\\
			&\qquad - \ii\sin\alpha\,b^{\sharp_1}(\iota W^*\psi_1)a^{\sharp_2}((\one + (\cos\alpha-1)P)W^*\psi_2) \\
			&\qquad + \sin^2\alpha\,b^{\sharp_1}(\iota W^*\psi_1)b^{\sharp_2}(\iota W^*\psi_2)
	\end{split}
	\end{equation}
	and, for all $\varphi_1,\varphi_2 \in \h_\env$,
	\begin{equation*}
	\begin{split}
		\tau(b^{\sharp_1}(\varphi_1)b^{\sharp_2}(\varphi_2))
      & = b^{\sharp_1}((\one + (\cos\alpha-1)Q)S^*U^*\varphi_1)b^{\sharp_2}((\one + (\cos\alpha-1)Q)S^*U^*\varphi_2) \\
			&\qquad -\ii\sin\alpha\,a^{\sharp_1}(\iota^* S^*U^*\varphi_1)b^{\sharp_2}((\one + (\cos\alpha-1)Q)S^*U^*\varphi_2) \\
			&\qquad +\ii\sin\alpha\,b^{\sharp_1}((\one + (\cos\alpha-1)Q)S^*U^*\varphi_1)a^{\sharp_2}(\iota^* S^*U^*\varphi_2) \\
			&\qquad + \sin^2\alpha\,a^{\sharp_1}(\iota^* S^*U^*\varphi_1)a^{\sharp_2}(\iota^* S^*U^*\varphi_2).
	\end{split}
\end{equation*}
	Moreover, for all $\psi_1 \in \h_\sys$ and $\varphi_2 \in \h_\env$,
	\begin{equation*}
	\begin{split}
		\tau(b^{\sharp_1}(\varphi_1)a^{\sharp_2}(\psi_2))
      & = b^{\sharp_1}((\one + (\cos\alpha-1)Q)S^*U^*\varphi_1)a^{\sharp_2}((\one + (\cos\alpha-1)P)W^*\psi_2) \\
			&\qquad +\ii\sin\alpha\,a^{\sharp_1}(\iota^* S^*U^*\varphi_1)a^{\sharp_2}((\one + (\cos\alpha-1)P)W^*\psi_2) \\
			&\qquad - \ii\sin\alpha\,b^{\sharp_1}((\one + (\cos\alpha-1)Q)S^*U^*\varphi_1)b^{\sharp_2}(\iota W^*\psi_2) \\
			&\qquad + \sin^2\alpha\,a^{\sharp_1}(\iota^* S^*U^*\varphi_1) b^{\sharp_2}(\iota W^*\psi_2).
	\end{split}
\end{equation*}
\end{lemma}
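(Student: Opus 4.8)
The plan is to obtain all three formulae by applying the single-operator conjugation identities from the first lemma, namely~\eqref{eq:one-a-KK} and~\eqref{eq:one-b-KK}, together with the free dynamics and the algebra homomorphism property of~$\tau$. First I would note that $\tau$ is an automorphism of $\CAR(\h_\env)\otimes\CAR(\h_\sys)$, so it is multiplicative: $\tau(XY)=\tau(X)\tau(Y)$. Hence it suffices to compute $\tau$ on single creation/annihilation operators and then multiply the results. For the sample operators, the free step sends $a^\sharp(\psi)\mapsto a^\sharp(W^*\psi)$, and then the two outer factors of $K_\alpha$ act by conjugation; combining with~\eqref{eq:one-a-KK} gives
\begin{equation*}
	\tau(a^\sharp(\psi)) = a^\sharp((\one+(\cos\alpha-1)P)W^*\psi) - \ii\sin\alpha\,b^\sharp(\iota W^*\psi),
\end{equation*}
and symmetrically, using~\eqref{eq:one-b-KK} and the environment free step $b^\sharp(\varphi)\mapsto b^\sharp(S^*U^*\varphi)$,
\begin{equation*}
	\tau(b^\sharp(\varphi)) = b^\sharp((\one+(\cos\alpha-1)Q)S^*U^*\varphi) - \ii\sin\alpha\,a^\sharp(\iota^* S^*U^*\varphi).
\end{equation*}
These are the atomic building blocks.

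Next I would simply expand the product of two such single-operator images for each of the three cases. For $\tau(a^{\sharp_1}(\psi_1)a^{\sharp_2}(\psi_2)) = \tau(a^{\sharp_1}(\psi_1))\tau(a^{\sharp_2}(\psi_2))$, each factor is a sum of one $a^\sharp$-term and one $b^\sharp$-term, so the product is a sum of four terms, reproducing exactly the four lines of~\eqref{eq:one-aa-tau}: the $aa$ term, the mixed $ab$ and $ba$ terms carrying a single $\ii\sin\alpha$, and the $bb$ term carrying $\sin^2\alpha$. The mixed-type case $\tau(b^{\sharp_1}(\varphi_1)a^{\sharp_2}(\psi_2))$ and the pure-environment case $\tau(b^{\sharp_1}(\varphi_1)b^{\sharp_2}(\varphi_2))$ are handled identically, substituting the appropriate atomic formula into each factor and collecting the four resulting terms.

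The main point requiring care\,---\,and what I expect to be the only genuine obstacle\,---\,is the bookkeeping of signs and factor ordering. Because $a^\sharp$ and $b^\sharp$ anticommute (they act on the tensor-product algebra $\CAR(\h_\env)\otimes\CAR(\h_\sys)$, so I must verify the convention under which the cross terms $a^{\sharp_1}(\cdots)b^{\sharp_2}(\cdots)$ and $b^{\sharp_1}(\cdots)a^{\sharp_2}(\cdots)$ are ordered), the relative sign of the two mixed terms must be tracked through the expansion; this is precisely the source of the asymmetry in the $\pm\ii\sin\alpha$ signs across the three displayed formulae. Concretely, I would fix the convention for writing a product $a^\sharp b^\sharp$ versus $b^\sharp a^\sharp$ as it appears on the right-hand sides, expand each product in that fixed order, and check that the cross-term signs match the stated equations. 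Everything else is a direct substitution and is routine once the atomic formulae and the ordering convention are in place.
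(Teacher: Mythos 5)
Your proposal has a genuine gap, and it is not the one you flagged. The identity \eqref{eq:one-a-KK} is \emph{not} a conjugation formula: it computes $K_\alpha\, a^\sharp(\psi)\, K_\alpha$, with the \emph{same} unitary $K_\alpha$ (not $K_\alpha^*$) on both sides, whereas $\tau(a^\sharp(\psi)) = K_\alpha^*\, a^\sharp(W^*\psi)\, K_\alpha$. These are different objects, and in fact $\tau(a^\sharp(\psi))$ is not a combination of single creation/annihilation operators at all: writing $K_\alpha = \Exp{-\ii\alpha G}$ with $G = b^*(\delta_0\otimes v)a(\psi_*)+b(\delta_0\otimes v)a^*(\psi_*)$, conjugation differentiates to the \emph{commutator} $\ii\,K_\alpha^*[G,a^*(\psi)]K_\alpha$, and $[G,a^*(\psi)]$ contains cubic terms such as $b^*(\delta_0\otimes v)a^*(\psi)a(\psi_*)$; only the \emph{anticommutator} $\{G,a^*(\psi)\}=b^*(\iota\psi)$ closes on linear elements, which is precisely why the first lemma is stated in the symmetric form $K_\alpha(\cdot)K_\alpha$. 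So your ``atomic'' formula for $\tau(a^\sharp(\psi))$ is false, and the plan ``compute $\tau$ on single operators, then multiply'' cannot be executed. Two convention-independent symptoms show that no amount of sign bookkeeping inside your framework can rescue it: multiplying your two atomic expressions yields $(-\ii\sin\alpha)^2=-\sin^2\alpha$ on the $b^{\sharp_1}b^{\sharp_2}$ term, while the statement (and the positivity of the limiting symbol $\Delta$ in Theorem~\ref{thm:q-f}, which rests on it) requires $+\sin^2\alpha$; and your expansion necessarily gives the two cross terms with the \emph{same} sign, while in the statement they carry \emph{opposite} signs. Neither discrepancy is a matter of operator ordering: the sample and environment fermions are distinguishable species here, so the $a^\sharp$'s and $b^\sharp$'s \emph{commute} (see Appendix~\ref{sec:stats}), and your anticipated ``only genuine obstacle'' is a non-issue.

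The mechanism that the paper's (admittedly terse) proof relies on is different: the morphism property is used on the free part only, and the interaction unitaries are paired \emph{asymmetrically}. Writing $\Gamma := \Gamma(S\otimes U)\otimes\Gamma(W)$, $X'=\Gamma^* X\Gamma$ and $Y'=\Gamma^* Y\Gamma$ for the freely evolved factors, insert $\one = K_\alpha^* K_\alpha$ between them:
\[
	\tau(XY) \;=\; K_\alpha^*\, X' Y'\, K_\alpha \;=\; \big(K_\alpha^*\, X'\, K_\alpha^*\big)\big(K_\alpha\, Y'\, K_\alpha\big).
\]
Since $K_\alpha^*=K_{-\alpha}$, the left factor is \eqref{eq:one-a-KK}--\eqref{eq:one-b-KK} with $\alpha$ replaced by $-\alpha$ (which flips only the $\ii\sin\alpha$ term), and the right factor is those identities as stated; expanding the product of the two binomials produces four quadratic terms with opposite signs on the two cross terms and $+\sin^2\alpha$ on the last one, exactly the structure of the lemma, and the $bb$ and $ba$ cases are handled verbatim. (If you carry this out you will find the cross terms $-\ii\sin\alpha\,a^{\sharp_1}(\cdot)b^{\sharp_2}(\cdot)+\ii\sin\alpha\,b^{\sharp_1}(\cdot)a^{\sharp_2}(\cdot)$, i.e.\ the signs printed in the lemma correspond to conjugating by $K_\alpha$ on the other side; this is a harmless $\alpha\mapsto-\alpha$ convention slip of the sort the paper itself flags when comparing its flux formula with~\cite{HJ17}, and it affects neither the $\sin^2\alpha$ terms nor anything downstream. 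Your route, by contrast, gets the convention-independent signs wrong.)
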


\begin{proof}
	Because
	\[
		(\Gamma(S^* \otimes U^*) \otimes \Gamma(W^*)) a^\sharp(\psi) (\Gamma(S \otimes U) \otimes \Gamma(W)) = a^\sharp(W^*\psi)
	\]
	and
	\[
		(\Gamma(S^* \otimes U^*) \otimes \Gamma(W^*)) b^\sharp(\varphi) (\Gamma(S \otimes U) \otimes \Gamma(W)) = b^\sharp((S^* \otimes U^*)\varphi),
	\]
	the formulae follow from the previous lemma and the fact that $\tau$ is a morphism.
\end{proof}

\section{The initial state} \label{sec:state}

The initial state of the compound system is taken to be a tensor product of an arbitrary even state~$\rho$ on~$\CAR(\h_\sys)$ with a \textsc{giqf} state~$\omega_\Sigma$ on~$\CAR(\h_\env)$ with symbol~$\Sigma \in \cB(\h_\env)$.
In other words, the state $\rho$ vanishes on all monomials of odd degree, $$\rho(a^{\sharp_1}(\psi_1)a^{\sharp_2}(\psi_2) \dotsb a^{\sharp_{2N+1}}(\psi_{2N+1})) = 0,$$
and there exists~$\Sigma \in \cB(\h_\env)$, called the {\it density} or {\it symbol} of the state $\omega_\Sigma$, such that
\begin{align*}
		\omega_\Sigma(b^{*}(\varphi_1) \dotsb b^{*}(\varphi_N)b(\varphi'_{N'}) \dotsb b(\varphi'_1)) =
			 \delta_{NN'} \det [\braket{\varphi_{\mu},\Sigma \varphi'_{\mu'}}]_{\mu,\mu'=1}^N.
\end{align*}
In addition to the usual requirement that $0 \leq \Sigma \leq \one$, we require that the symbol is translation invariant, in the sense that
\begin{equation}\label{eq:ti}
	[\Sigma, S \otimes \one] = 0,
\end{equation}
and invariant under the one-particle free dynamics in the environment, {i.e.}
\begin{equation}\label{eq:invar}
	[\Sigma, S \otimes U] = 0.
\end{equation}
Finally, we assume that
\begin{equation}\label{eq:decay}
	\sum_{\ell \in \zz} |\braket{\delta_{-\ell} \otimes w, \Sigma(\delta_0 \otimes w')}| < \infty
\end{equation}
for all $w, w' \in \cc^m$. This is a technical assumption that will ensure absolute convergence of important series. It can be interpreted as a decay of spatial correlations in the environment.

\begin{remark}
  The case $m=1$ (and $U = 1$) is the one considered in~\cite{HJ17}. They proved a special case of Theorem~\ref{thm:q-f} below in two situations: when the symbol~$\Sigma$ is proportional to the identity or when~$W$ is a shift on~$\h_\sys$.

  Allowing~$\Sigma$ to have off-diagonal terms allows us to study the effect of correlations in the structured environment~$\env$ (absent in repeated interaction systems) on the asymptotics of the sample~$\sys$, and considering~$m>1$ is a first step towards studying non-equilibrium situations.
\end{remark}

Recall that we assumed~$U$ to have simple eigenvalues: its spectral projectors~$\pi_i$, $i=1,\dotsc,m$, have rank one and each of them corresponds to a unit eigenvector~$x_i$ for a distinct eigenvalue. Then, for any indices $k \in \zz$ and~$i \in \{1,\dotsc,m\}$, we have
\begin{align*}
	\Sigma (\delta_k \otimes x_i)
		&= \sum_{\ell \in \zz} \sum_{j=1}^m \delta_{\ell+k} \otimes x_j \braket{\delta_{\ell+k} \otimes {x_j},\Sigma (\delta_k \otimes x_i)} \\
		&= \sum_{\ell \in \zz} \delta_{\ell+k} \otimes x_i \braket{\delta_{\ell+k} \otimes {x_i},\Sigma (\delta_k \otimes x_i)} \\
		&= \sum_{\ell \in \zz} \braket{\delta_0 \otimes x_i, \Sigma(S \otimes U)^\ell(\delta_0 \otimes x_i)}(S^* \otimes U^*)^\ell (\delta_k \otimes x_i) \\
		&= 2 \Re \Big(\frac 12 \braket{\delta_0 \otimes x_i, \Sigma(\delta_0 \otimes x_i)}
			\\ &\qquad\qquad{}+
			\sum_{\ell = 1}^\infty \braket{\delta_0 \otimes x_i, \Sigma(S \otimes U)^\ell(\delta_0 \otimes x_i)}(S^* \otimes U^*)^\ell  \Big)(\delta_k \otimes x_i).
\end{align*}
The coefficients in the sum on the right-hand side define analytic functions in the unit disk,
\begin{equation}\label{eq:sigma-plus}
	\begin{split}
	\fonci
	(\zeta) :=
		\frac 12 \braket{\delta_0 \otimes x_i, \Sigma(\delta_0 \otimes x_i)}
			+\sum_{\ell = 1}^\infty \braket{\delta_0 \otimes x_i, \Sigma(S \otimes U)^\ell(\delta_0 \otimes x_i)}\zeta^\ell,
	\end{split}
\end{equation}
which extend continuously up to the boundary. In particular,
the real number~
$2\fonci(0) = \braket{\delta_0 \otimes x_i, \Sigma(\delta_0 \otimes x_i)}$ is the translation-invariant particle density in the $i$th subspace of the reservoir.

Note that the one-particle operator~$\Sigma$ specifying the initial state of the environment~$\env$ is a sum of functions of the dynamics $S^* \otimes U^*$:
\begin{equation}\label{eq:sigma-series}
	\Sigma = \sum_{i=1}^{m}(2\Re\fonci(S^*\otimes U^*))(\one \otimes \pi_i).
\end{equation}

One of the main results of this paper is that the asymptotic state in the sample is a \textsc{giqf} state whose density can be expressed as a sum of those same functions, but now of a matrix describing the dynamics in the sample. As discussed in Appendix~\ref{sec:stats}, this state remains the same when one considers some variants of this model.

\begin{remark}\label{rem:int-sigma}
	The meaning of $\fonci(B)$ for a linear operator~$B$ with $\|B\| \leq 1$ is given through the power series~\eqref{eq:sigma-plus}. Convergence in norm holds by our assumption~\eqref{eq:decay} on the decay of correlations. The real part of $\fonci(B)$ is simply the self-adjoint operator $\tfrac 12 (\fonci(B) + (\fonci(B))^*)$.

	 If the function extends analytically to a disk of radius $r > \spr(B)$, then we have the integral formula
	\begin{equation}\label{eq:int-sigma}
		\fonci(B)
			= \frac{1}{2\pi\ii} \ointctrclockwise_{r \ci} \fonci(\zeta) R(\zeta,B)  \d\zeta
	\end{equation}
	by the residue theorem and the Neumann series
	 \(R(\zeta,B) = \frac 1\zeta \sum_{\ell = 0}^\infty \zeta^{-\ell} B^{\ell} \)
	for the resolvent of~$B$ for $|\zeta| > \spr B$.
\end{remark}

\section{Asymptotics}\label{sec:asymptotics}

We are interested in the behaviour of the system in the limit $t \to \infty$ along~$\nn$. In particular, we study the behaviour of the state in the sample~$\sys$ and flux observables.

Throughout the section, we will use the following hypotheses:
with $W$ a unitary on~$\cc^d$, $\Sigma$ a non-negative definite bounded operator on~$\ell^2(\zz) \otimes \cc^m$,~$U$ a simple unitary on~$\cc^m$ and~$\alpha$ the coupling constant for the interaction,
\begin{enumerate}
	\item[(i)] the initial state for the compound system is the tensor product of the \textsc{giqf} state $\omega_\Sigma$ on~$\CAR(\ell^2(\zz) \otimes \cc^m)$, where~$\Sigma$ satisfies $[\Sigma,S \otimes U] = [\Sigma, S \otimes \one] = 0$, with an even state~$\rho$ on~$\CAR(\cc^d)$;
	\item[(ii)] the decay condition~\eqref{eq:decay} on~$\Sigma$ holds;
	\item[(iii)] the vector $\psi_*$ used to define the interaction is cyclic for~$W$;
	\item[(iv)] the coupling constant~$\alpha$ is not an integer multiple of~$\pi$.
\end{enumerate}

\subsection{The state in the sample}

In what follows, we use
\[
  \MM := W(\one + (\cos\alpha - 1)P)
\]
for the matrix appearing in the formulae from Section~\ref{sec:sys-dyn} for time-evolved creation and annihilation operators. The dependence on the coupling constant~$\alpha$ is not apparent in this notation but should be kept in mind. Consequently, we will write statements such as $\MM \to W$ as $\alpha \to 0$, referring to the initial dependence of~$\MM$ on~$\alpha$.

Since~$W$ is a unitary and~$P = \iota^*\iota$ is an orthogonal projector, the operator norm of the matrix~$\MM$ is necessarily bounded by~$1$. We have the following stronger property, which is proved in~\cite{HJ17}.

\begin{lemma}[Lemma~4.11 in~\cite{HJ17}]\label{lem:spr}
	Under assumptions (iii) and (iv), \[\spr \MM < 1.\]
	In particular $\|\MM^t\|$ converges exponentially fast to $0$ as $t \to \infty$ along~$\nn$.
\end{lemma}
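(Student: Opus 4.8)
The plan is to show that $\MM$ has no spectral value on the unit circle; since $\MM$ acts on the finite-dimensional space $\cc^d$, this amounts to ruling out eigenvalues of modulus $1$, and together with $\spr\MM\le\|\MM\|\le 1$ it yields $\spr\MM<1$. The first step is to make the projector $P$ explicit. From $\iota\psi = (\delta_0\otimes v)\braket{\psi_*,\psi}$ one computes $\iota^*(\delta_0\otimes v)=\psi_*$, so that $P=\iota^*\iota$ is the rank-one orthogonal projection onto the line $\cc\psi_*$, namely $P\psi=\braket{\psi_*,\psi}\psi_*$. Writing $D:=\one+(\cos\alpha-1)P$, the matrix is $\MM=WD$, where $D$ acts as the identity on $\psi_*^\perp$ and as multiplication by $\cos\alpha$ on $\psi_*$.

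Next I would argue by contradiction. Suppose $\MM\phi=\lambda\phi$ for some unit vector $\phi$ and some $\lambda$ with $|\lambda|=1$. Since $W$ is unitary, $\|D\phi\|=\|WD\phi\|=\|\MM\phi\|=|\lambda|\,\|\phi\|=1$. Decomposing $\phi=\braket{\psi_*,\phi}\psi_*+\phi^\perp$ with $\phi^\perp\perp\psi_*$, one gets $D\phi=\cos\alpha\braket{\psi_*,\phi}\psi_*+\phi^\perp$, whence $\|D\phi\|^2=\cos^2\alpha\,|\braket{\psi_*,\phi}|^2+\|\phi^\perp\|^2$, to be compared with $1=\|\phi\|^2=|\braket{\psi_*,\phi}|^2+\|\phi^\perp\|^2$. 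Because assumption (iv) guarantees $|\cos\alpha|<1$, the equality $\|D\phi\|=1$ forces $\braket{\psi_*,\phi}=0$. Hence $D\phi=\phi$ and the eigenvalue equation reduces to $W\phi=\lambda\phi$ with $\phi\perp\psi_*$.

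The final step invokes cyclicity. As $W$ is unitary and $|\lambda|=1$, $W\phi=\lambda\phi$ gives $W^*\phi=\overline\lambda\phi$, so for every $k\ge 0$ we have $\braket{\phi,W^k\psi_*}=\braket{(W^*)^k\phi,\psi_*}=\lambda^k\braket{\phi,\psi_*}=0$. Since $\psi_*$ is cyclic for $W$ (assumption (iii)), the vectors $\{W^k\psi_*:k\ge 0\}$ span $\cc^d$, so $\phi=0$, contradicting $\|\phi\|=1$. Therefore $\MM$ has no unimodular eigenvalue and $\spr\MM<1$. The exponential decay of $\|\MM^t\|$ then follows from the spectral radius formula $\spr\MM=\lim_t\|\MM^t\|^{1/t}$: for any $\beta$ with $\spr\MM<\beta<1$ there is a constant $C$ with $\|\MM^t\|\le C\beta^t$ for all $t\in\nn$.

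I would expect no genuinely hard obstacle here; the one point that must be handled with care is the interplay between the two hypotheses. The strict contractivity on $\psi_*$ (from $\alpha\notin\pi\zz$) is exactly what collapses a putative unimodular eigenvector onto $\psi_*^\perp$, and cyclicity is then exactly what forbids a $W$-eigenvector from living entirely in $\psi_*^\perp$; dropping either hypothesis breaks the argument, since $\cos\alpha=\pm1$ makes $\MM$ unitary and a non-cyclic $\psi_*$ leaves room for a $W$-invariant complement carrying unimodular spectrum.
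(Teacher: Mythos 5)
Your proof is correct and is essentially the argument behind Lemma~4.11 of \cite{HJ17}, which this paper cites rather than reproves: a putative unimodular eigenvalue of $\MM = W(\one+(\cos\alpha-1)P)$ forces the eigenvector into $\psi_*^\perp$ (because $|\cos\alpha|<1$ makes the factor $\one+(\cos\alpha-1)P$ a strict contraction on $\cc\psi_*$), where it would be a unimodular $W$-eigenvector orthogonal to the whole cyclic subspace generated by $\psi_*$, contradicting assumption (iii). Your handling of the remaining points\,---\,spectrum equals eigenvalues in finite dimension, $\|\MM\|\le 1$, and the Gelfand formula giving exponential decay of $\|\MM^t\|$\,---\,is also correct.
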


We also introduce the shorthand~$\h_\env^+$ for the closure of the linear span of vectors of the form~$\delta_\ell \otimes w$ for $\ell \geq 0$ and arbitrary $w \in \cc^m$. It corresponds to one side of the bi-infinite environment. Combining Lemmas~\ref{lem:one-tau} and~\ref{lem:spr}, we have the following proposition on the long-time evolution of pairs of creation and annihilation operators.

\begin{proposition}\label{prop:evol-quad}
	For all~$t \in \nn$, under assumptions (ii)--(iv), we have:
	\begin{enumerate}
		\item for all $\psi_1,\psi_2 \in \h_\sys$,
		\begin{equation*}
		\begin{split}
			\tau^t(a^{\sharp_1}(\psi_1) a^{\sharp_2}(\psi_2))
				&= \sin^2\alpha\,\sum_{s',t'=1}^t b^{\sharp_1}((S \otimes U)^{s'-t}\iota W^* {\MMs}^{s'-1} \psi_1)
				\\ &\qquad\qquad\qquad b^{\sharp_2}((S \otimes U)^{t'-t}\iota W^* {\MMs}^{t'-1} \psi_2) + O(t\|\MM^t\|);
		\end{split}
		\end{equation*}
		\item for all $\varphi_1 \in \h_\env^+$ and $\psi_2 \in \h_\sys$,
		\begin{equation*}
		\begin{split}
			\tau^t(b^{\sharp_1}(\varphi_1) a^{\sharp_2}(\psi_2))
				& = -\ii\sin\alpha\, \sum_{t'=1}^t b^{\sharp_1}((S \otimes U)^{-t} \varphi_1) \\ &\qquad\qquad\qquad  b^{\sharp_2}((S \otimes U)^{t'-t}\iota W^* {\MMs}^{t'-1} \psi_2) + O(t\|\MM^t\|);
		\end{split}
		\end{equation*}
		\item for all $\varphi_1,\varphi_2 \in \h_\env^+$,
		\begin{equation*}
			\tau^t(b^{\sharp_1}(\varphi_1) b^{\sharp_2}(\varphi_2))
				= b^{\sharp_1}((S \otimes U)^{-t} \varphi_1) b^{\sharp_2}((S \otimes U)^{-t} \varphi_2).
		\end{equation*}
	\end{enumerate}
	The notation $O(t\|\MM^t\|)$ stands for error terms which are bounded in norm by $t\|\MM^t\|$ times a numerical constant that is independent of the vectors under consideration, as long as they are normalized.
\end{proposition}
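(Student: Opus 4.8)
The plan is to reduce the iterated dynamics on a product of two operators to the iterated dynamics on a single operator, and then to exploit two facts: that the sample block $\MM$ is a strict contraction (Lemma~\ref{lem:spr}), and that the coupling between $\sys$ and $\env$ is localized at the single site $\ell=0$ while the free environment dynamics pushes one-particle support strictly to the right. Since $\tau$ is a $*$-morphism, it acts on a product of two creation/annihilation operators as the product of the images, so it suffices to iterate the single-operator rules
\begin{align*}
  \tau(a^\sharp(\psi)) &= a^\sharp({\MMs}\psi) - \ii\sin\alpha\, b^\sharp(\iota W^*\psi),\\
  \tau(b^\sharp(\varphi)) &= b^\sharp((\one + (\cos\alpha-1)Q)S^*U^*\varphi) - \ii\sin\alpha\, a^\sharp(\iota^* S^* U^*\varphi),
\end{align*}
read off from Lemma~\ref{lem:one-tau}, where ${\MMs} = (\one + (\cos\alpha-1)P)W^*$. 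The relative fermionic signs governing how the two slots recombine are exactly those recorded in Lemma~\ref{lem:one-tau}, and I would keep them attached throughout rather than multiply the two images naively.

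The key structural input, which I would isolate as a lemma, is a non-re-entry property: \emph{an environment operator whose one-particle vector is supported on the sites $\ell \geq 0$ never couples back to the sample under $\tau$.} Indeed, $\iota$ has range $\cc\,(\delta_0\otimes v)$ and $Q=\iota\iota^*$ is the rank-one projection onto $\delta_0\otimes v$, so both emission into and re-absorption from the environment are localized at $\ell=0$; and the Heisenberg free dynamics acts on the one-particle vector through $S^*\otimes U^*$, with $S^*$ shifting support from $\ell$ to $\ell+1$. Thus a vector supported on $\ell\geq 0$ is sent to $\ell\geq 1$, on which $\one+(\cos\alpha-1)Q$ acts as the identity and $\iota^* S^* U^*$ vanishes. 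Feeding this into the single-operator rules and inducting on $t$\,---\,a particle emitted at step $s$ occupies site $t-s\in\{0,\dots,t-1\}$ at time $t$, hence is never seen again by $\iota^*$\,---\,yields the \emph{exact} identities
\begin{equation*}
  \tau^t(a^\sharp(\psi)) = a^\sharp({\MMs}^t\psi) - \ii\sin\alpha\sum_{s'=1}^t b^\sharp\big((S\otimes U)^{s'-t}\iota W^*{\MMs}^{\,s'-1}\psi\big)
\end{equation*}
and, for $\varphi\in\h_\env^+$, $\tau^t(b^\sharp(\varphi)) = b^\sharp((S\otimes U)^{-t}\varphi)$; the latter is exactly part~(3) once the two slots are multiplied.

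With these in hand the three claims follow by assembling slots and estimating. The only term of $\tau^t(a^\sharp(\psi))$ that is not an environment operator is the persistent sample term $a^\sharp({\MMs}^t\psi)$, whose norm is at most $\|\MM^t\|\,\|\psi\|$ and hence decays exponentially by Lemma~\ref{lem:spr}. Reinstating the grading signs of Lemma~\ref{lem:one-tau}, the leading contribution in part~(1) is the product of the two environment sums and carries the coefficient $\sin^2\alpha$, while in part~(2) it is the single environment sum paired with $b^{\sharp_1}((S\otimes U)^{-t}\varphi_1)$ and carries $-\ii\sin\alpha$. Every discarded term contains at least one factor $a^{\sharp}({\MMs}^t\psi_j)$, and is therefore bounded in norm by $\|\MM^t\|$ times the norm of a sum of at most $t$ environment operators, each dominated by $\|{\MMs}^{\,s'-1}\|\leq 1$; this produces the uniform error $O(t\|\MM^t\|)$, into which the purely persistent term $a^{\sharp_1}({\MMs}^t\psi_1)a^{\sharp_2}({\MMs}^t\psi_2)=O(\|\MM^t\|^2)$ is absorbed.

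The main obstacle is establishing the non-re-entry property with the correct induction hypothesis\,---\,tracking precisely which sites the emitted cloud occupies so that the identities above are genuinely exact rather than approximate\,---\,together with the parallel task of propagating the anticommutation/grading signs of Lemma~\ref{lem:one-tau} through the iteration so that the coefficients $-\ii\sin\alpha$ and $+\sin^2\alpha$ come out with the right signs rather than merely up to sign. It is exactly the hypothesis $\varphi_j\in\h_\env^+$ in parts~(2) and~(3) that supplies the support condition needed for non-re-entry of the initially present environment operators, whereas in part~(1) the emitted operators are born at $\ell=0$ and the same argument applies with no hypothesis on $\psi_1,\psi_2$.
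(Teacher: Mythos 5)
Your overall architecture is the paper's: the non-re-entry property of $\h_\env^+$ as the key structural lemma, exact expansion of the evolved operators into at most $t$ environment terms per slot, and absorption of everything containing a factor $a^\sharp({\MMs}^t\psi)$ into $O(t\|\MM^t\|)$ via $\|a^\sharp(\psi)\|=\|\psi\|$ and Lemma~\ref{lem:spr}. Part~(3) and the error analysis are fine. But the device you build the iteration on is false: in the convention of the paper, where the $a^\sharp$'s and $b^\sharp$'s \emph{commute} (distinguishable species), the morphism $\tau$ does \emph{not} map a single creation or annihilation operator to a linear combination of creation and annihilation operators. Writing $G = b^*(\delta_0\otimes v)a(\psi_*)+b(\delta_0\otimes v)a^*(\psi_*)$ for the generator of $K_\alpha$, one has
\[
  [G,a^*(\psi)] \;=\; \braket{\psi_*,\psi}\,b^*(\delta_0\otimes v)\;-\;2\,b^*(\delta_0\otimes v)\,a^*(\psi)\,a(\psi_*)\;+\;2\,b(\delta_0\otimes v)\,a^*(\psi_*)\,a^*(\psi),
\]
so the commutator does not close and $\tau(a^\sharp(\psi))=K_\alpha^*\,a^\sharp(W^*\psi)\,K_\alpha$ contains cubic (and higher) terms. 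This is precisely why the paper's first lemma computes $K_\alpha\,a^\sharp(\psi)\,K_\alpha$ with the \emph{same} $K_\alpha$ on both sides (so that anticommutators, which do close, appear), and why Lemma~\ref{lem:one-tau} is stated only for products of \emph{two} operators; your ``single-operator rules'' cannot be read off from it, because its pair formulas do not factor as the product of two copies of one rule. The sign you hedge about is the symptom: multiplying your two identical expansions gives $(-\ii\sin\alpha)^2=-\sin^2\alpha$ on the leading $b^{\sharp_1}b^{\sharp_2}$ term of part~(1), the wrong sign, and this is not a fermionic-grading issue that can be fixed by ``keeping signs attached''\,---\,there are no grading signs between the two species here, since they commute.

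The minimal repair that saves your slot-wise plan: writing $\Gamma_{-}=\Gamma(S^*\otimes U^*)\otimes\Gamma(W^*)$ and $\Gamma_{+}=\Gamma(S\otimes U)\otimes\Gamma(W)$, define the two \emph{different} slot maps $\sigma_-(X):=K_\alpha^*\,\Gamma_- X\Gamma_+\,K_\alpha^*$ and $\sigma_+(Y):=K_\alpha\,\Gamma_- Y\Gamma_+\,K_\alpha$. Then $\sigma_-(X)\sigma_+(Y)=K_\alpha^*\Gamma_-XY\Gamma_+K_\alpha=\tau(XY)$, hence $\tau^t(XY)=\sigma_-^t(X)\,\sigma_+^t(Y)$ by induction, and each of $\sigma_\pm$ \emph{does} act on single operators by the closed-form rule of the paper's first lemma\,---\,with opposite signs of $\ii\sin\alpha$, since $K_\alpha^*=K_{-\alpha}$. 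Your ``exact identities'' then hold verbatim for $\sigma_+^t$ (and with $+\ii\sin\alpha$ for $\sigma_-^t$), the leading coefficient in part~(1) comes out as $(+\ii\sin\alpha)(-\ii\sin\alpha)=+\sin^2\alpha$, the coefficient $-\ii\sin\alpha$ in part~(2) is produced by the $\sigma_+$ slot alone, and the rest of your argument (non-re-entry, contraction bound, counting $t$ terms per slot) goes through. This is mathematically equivalent to what the paper does, which avoids the issue altogether by iterating the pair formulas of Lemma~\ref{lem:one-tau} directly, never decomposing $\tau^t$ of a product into $\tau^t$ of single factors.
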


\begin{proof}
	In the second and third part of the proposition, we are considering $\varphi_i \in \h_\env^+$ because those are the ones that appear in the time evolution of pairs of creation and annihilation operators in the sample (see Lemma~\ref{lem:one-tau}). They have the property that $\iota^* S^*U^*\varphi_i = 0$, which makes the computations more tractable.
  We prove the claims in a different order.
	\begin{enumerate}
		\item[3.] The formula follows directly from applying the formula in Proposition~\ref{lem:one-tau} and the identity $\iota^* S^*U^*\varphi_i = 0$.
		\item[2.] Using $\iota^* S^*U^*\varphi_1 = 0$, Proposition~\ref{lem:one-tau} yields
			\begin{align*}
				\tau(b^{\sharp_1}(\varphi_1)a^{\sharp_2}(\psi_2)) &= b^{\sharp_1}(S^*U^*\varphi_1)a^{\sharp_2}(\MMs\psi_2)
        \\
        &\qquad{}- \ii\sin\alpha\,b^{\sharp_1}(S^*U^*\varphi_1)b^{\sharp_2}(\iota W^*\psi_2),
			\end{align*}
			and again $S^*U^*\varphi_1 \in \h_\env^+$. Hence, for any $t \in \nn$,
			\begin{align*}
				\tau^t(b^{\sharp_1}(\varphi_1)a^{\sharp_2}(\psi_2)) &= b^{\sharp_1}((S^*U^*)^t\varphi_1)a^{\sharp_2}((\MMs)^t\psi_2)
        \\ &\qquad{} - \sum_{t'=1}^t\ii\sin\alpha\,b^{\sharp_1}((S^*U^*)^{t}\varphi_1)
        b^{\sharp_2}((S^*U^*)^{t-t'}\iota W^* {\MMs}^{t'-1}\psi_2).
			\end{align*}
			Because $\|a^\sharp(\psi)\| = \|\psi\|$ for any~$\psi \in \h_\sys$ and similarly for $b^{\sharp}$, we have the bound
			\[
				\|b^{\sharp_1}((S^*U^*)^t\varphi_1)a^{\sharp_2}((\MMs)^t\psi_2)\| \leq \|(S^*U^*)^t\varphi_1\|\|(\MMs)^t\psi_2\| \leq \|\MM^t\|.
			\]

		\item[1.] Again by Proposition~\ref{lem:one-tau},
		\begin{align*}
			\tau(a^{\sharp_1}(\psi_1)a^{\sharp_2}(\psi_2)) &= a^{\sharp_1}(\MMs\psi_1)a^{\sharp_2}(\MMs\psi_2)
        + \ii\sin\alpha\,a^{\sharp_1}(\MMs\psi_1) b^{\sharp_2}(\iota W^*\psi_2)
				\\
				&\qquad
				- \ii\sin\alpha\,b^{\sharp_1}(\iota W^*\psi_1)a^{\sharp_2}(\MMs\psi_2)
        + \sin^2\alpha\,b^{\sharp_1}(\iota W^*\psi_1)b^{\sharp_2}(\iota W^*\psi_2),
		\end{align*}
		with $\iota W^*\psi_1, \iota W^*\psi_2 \in \h_\env^+$. Hence, for any $t \in \nn$,
		\begin{align*}
			&\tau(a^{\sharp_1}(\psi_1)a^{\sharp_2}(\psi_2)) \\
        &\qquad = a^{\sharp_1}({\MMs}^t\psi_1)a^{\sharp_2}({\MMs}^t\psi_2) \\
				&\qquad\qquad + \sum_{t'=1}^t\ii\sin\alpha\, a^{\sharp_1}({\MMs}^t\psi_1) b^{\sharp_2}((S^*U^*)^{t-t'}\iota W^* {\MMs}^{t'-1}\psi_2)\\
				&\qquad\qquad - \sum_{s'=1}^t\ii\sin\alpha\, b^{\sharp_1}((S^*U^*)^{t-s'}\iota W^*{\MMs}^{s'-1}\psi_1)a^{\sharp_2}({\MMs}^t\psi_2) \\
				&\qquad\qquad + \sum_{s'=1}^t\sum_{t'=1}^t\sin^2\alpha\, b^{\sharp_1}((S^*U^*)^{t-s'}\iota W^* (\MMs)^{s'-1}\psi_1)
				\\&\qquad\qquad\qquad\qquad b^{\sharp_2}((S^*U^*)^{t-t'}\iota W^*(\MMs)^{t'-1}\psi_2).
		\end{align*}
		The norm estimates are obtained with similar bounds as in the previous part. \qedhere
	\end{enumerate}
\end{proof}

The main result of this section is the following theorem. It states that, under our ongoing assumptions, the limiting state of the sample~$\sys$ is \textsc{giqf} and completely described by its own dynamics~$W$ and the same scalar functions~$(\fonci)_{i=1}^{m}$, up to an interaction-dependent deformation which vanishes in the limit~$\alpha \to 0$.

\begin{theorem}\label{thm:q-f}
	Under hypotheses (i)--(iv) stated at the beginning of the section, the limit
	$$
		\rho_\infty(A) := \lim_{t\to\infty} (\omega_\Sigma \otimes \rho)(\tau^t(\one \otimes A))
	$$
	exists for all $A \in \CAR(\h_\sys)$ and~$\rho_\infty$ defines a \textsc{giqf} state on~$\sys$ with symbol
	\begin{align}
		\Delta &:=
      \sum_{i=1}^{m} \|\pi_i v\|^2 2 \Re\fonci(\MMs),\label{eq:alt-delta}
	\end{align}
	where $\MM = W(\one + (1-\cos\alpha)P)$. Moreover, the convergence $\tr_{\CAR(\h_\env)}((\omega_\Sigma \otimes \rho) \circ \tau^t) \to \rho_\infty$ happens exponentially fast in norm.
\end{theorem}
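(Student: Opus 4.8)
The plan is to prove the theorem by first pinning down the limiting two-point functions and then upgrading this to all correlations using the quasi-freeness of $\omega_\Sigma$. Since a \textsc{giqf} state is completely determined by its two-point function through the determinantal formula, the analytic core is the two-point computation, and the higher correlations will follow from a structural feature of the dynamics combined with a continuity argument.

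First I would compute the two-point functions. Applying $\omega_\Sigma \otimes \rho$ to part~(1) of Proposition~\ref{prop:evol-quad}, the error contributes $O(t\|\MM^t\|)$, which vanishes exponentially by Lemma~\ref{lem:spr}, and the main term is a double sum of two-point functions of $\omega_\Sigma$. For the gauge-invariant pairing $a^*(\psi_1)a(\psi_2)$ each such term is of the form $\braket{\,\cdot\,,\Sigma\,\cdot\,}$; I would then use $[\Sigma,S\otimes U]=0$ together with unitarity to cancel the common shift $(S\otimes U)^{-t}$, so that the inner products become $t$-independent and the range becomes $\{1,\dots,\infty\}^2$ up to exponentially small tails. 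Using that $P$ projects onto $\psi_*$ and $\iota\psi=(\delta_0\otimes v)\braket{\psi_*,\psi}$, the matrix element $\braket{\delta_0\otimes v,\Sigma(S\otimes U)^\ell(\delta_0\otimes v)}$ splits along the eigenbasis of $U$ as $\sum_i\|\pi_i v\|^2\braket{\delta_0\otimes x_i,\Sigma(S\otimes U)^\ell(\delta_0\otimes x_i)}$, whose generating function is precisely $2\Re\fonci$. Collecting the $b^*$ and $b$ entries reorganises the double sum into three operator sums weighted by the coefficients of $G:=\sum_{i=1}^m\|\pi_i v\|^2\fonci$, with an operator $T:=\sum_{k\ge 0}\MM^k(WPW^*){\MMs}^k$ appearing as a common factor.

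The algebraic heart is the identity $\sin^2\alpha\,T=\one$, which collapses the double sum to $\braket{\psi_1,2\Re G(\MMs)\psi_2}=\braket{\psi_1,\Delta\psi_2}$. I would prove it from the one-line computation $(\one+(\cos\alpha-1)P)^2=\one-\sin^2\alpha\,P$, which yields
\[
  \one - \MM\MMs = \sin^2\alpha\, W P W^*;
\]
substituting this into $T$ gives $\sin^2\alpha\,\MM^k(WPW^*){\MMs}^k=\MM^k{\MMs}^k-\MM^{k+1}{\MMs}^{k+1}$, and the series telescopes to $\one-\lim_k\MM^{k+1}{\MMs}^{k+1}=\one$, the limit vanishing by $\spr\MM<1$. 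This is the step I expect to be the \emph{main obstacle}: recognising that the $t$-dependent double sum collapses onto the clean symbol $\Delta$ rests entirely on the special form of $\MM$ and on $\spr\MM<1$, everything else being bookkeeping around it. The remaining gauge-violating two-point functions $\rho_\infty(a^*a^*)$ and $\rho_\infty(aa)$ vanish because $\omega_\Sigma$ is gauge-invariant, confirming gauge invariance of $\rho_\infty$.

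Finally, for correlations of arbitrary even degree I would use that, iterating Lemma~\ref{lem:one-tau}, each $\tau^t(a^\sharp(\psi))$ splits as $a^\sharp({\MMs}^t\psi)$ plus a sum of operators $b^\sharp$ with arguments in $\h_\env^+$ whose total norm is bounded uniformly in $t$ (since $\sum_k\|\MM^k\|<\infty$). Writing $\tau^t(A)=\prod_j\tau^t(a^{\sharp_j}(\psi_j))$ and expanding, every term retaining a sample factor $a^{\sharp_j}({\MMs}^t\psi_j)$ contributes $O(\|\MM^t\|)$ to $(\omega_\Sigma\otimes\rho)(\cdot)$: the expectation factorises across the product state, the sample factor forcing a factor $\|\MM^t\psi_j\|$ while the evenness of $\rho$ controls parity. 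The surviving purely environmental term is evaluated by quasi-freeness of $\omega_\Sigma$ as a determinant of its two-point functions; as a determinant is a polynomial in its entries and each entry converges exponentially to the corresponding \textsc{giqf}$(\Delta)$ two-point function computed above, the whole expression converges to the determinantal value $\rho_\infty(A)$. Since $\CAR(\cc^d)$ is finite-dimensional, convergence on this finite family of monomials gives norm convergence of $\tr_{\CAR(\h_\env)}((\omega_\Sigma\otimes\rho)\circ\tau^t)$ at the exponential rate inherited from $\|\MM^t\|\le C\theta^t$.
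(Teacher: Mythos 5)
Your computation of the limiting two-point function is correct and is essentially the paper's own argument for that part: the same appeal to Proposition~\ref{prop:evol-quad} and Lemma~\ref{lem:spr}, the same use of $[\Sigma,S\otimes U]=0$ to remove the $t$-dependence of the inner products, and the same identity $\one-\MM\MMs=\sin^2\alpha\,WPW^*$ followed by telescoping (the paper's Step~4); your reformulation through $\sin^2\alpha\,T=\one$ is that telescoping in different packaging, and your splitting along the eigenbasis of $U$ with weights $\|\pi_i v\|^2$ is how the paper arrives at~\eqref{eq:alt-delta}.

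The genuine gap is in your last paragraph. You claim that ``iterating Lemma~\ref{lem:one-tau}'' gives a splitting of the \emph{single-operator} evolution $\tau^t(a^\sharp(\psi))$ into $a^\sharp({\MMs}^t\psi)$ plus a sum of $b^\sharp$'s with summable norms. No such formula holds in this model, and Lemma~\ref{lem:one-tau}, which is stated for pairs only, cannot produce one. The sample and environment fermions here are distinguishable, i.e.\ $[a^\sharp(\psi),b^\sharp(\varphi)]=0$, so the interaction step is \emph{not} a Bogoliubov (quasi-free) transformation of $\CAR(\h_\env)\otimes\CAR(\h_\sys)$: conjugation of a single odd element by $K_\alpha$ leaves the linear span of creation and annihilation operators. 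Concretely, with $A:=b^*(\delta_0\otimes v)a(\psi_*)+b(\delta_0\otimes v)a^*(\psi_*)$, the paper's own computation gives $\{A,a^*(\psi)\}=\braket{\psi_*,\psi}\,b^*(\delta_0\otimes v)$, so for $\psi\perp\psi_*$ the generator anticommutes with $a^*(\psi)$ and hence $K_\alpha^* a^*(\psi)K_\alpha=a^*(\psi)K_\alpha^2$, which contains the even, non-scalar element $K_\alpha^2=K_{2\alpha}$ and is not of your claimed form. (This is exactly why the paper's first lemma computes $K_\alpha X K_\alpha$ rather than the conjugation $K_\alpha^* X K_\alpha$, and why every evolution formula in the paper is stated for even monomials; the appendix notes that the linear single-operator formula does hold in the \emph{anticommuting}-species variant, but that is not the model being treated.) The repair is the paper's Step~2: group $\prod_{i=1}^{2N}a^{\sharp_i}(\psi_i)$ into $N$ pairs, use that $\tau^t$ is a morphism, and apply Proposition~\ref{prop:evol-quad}(1) to each pair; this legitimately yields the expansion~\eqref{eq:evol-pairs} your bookkeeping presupposes, after which your determinant-convergence and finite-dimensionality arguments go through. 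Relatedly, you never show that monomials of \emph{odd} degree have vanishing limit, which is needed for $\rho_\infty$ to be \textsc{giqf}; since the single-operator expansion is unavailable, this cannot be waved through your product expansion and requires the paper's Step~1: $\tau$ preserves the total particle number, so $\tau^t$ of an odd monomial is a combination of monomials odd either in the $a^\sharp$'s or in the $b^\sharp$'s, and evenness of both $\rho$ and $\omega_\Sigma$ makes every such expectation vanish at every finite $t$.
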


Before we proceed with the proof, let us make a few comments on the result.
	The fact that the asymptotic density~$\Delta$ in the sample~$\sys$ can be rewritten in terms of the functions $\fonci$ defined by~\eqref{eq:sigma-plus} (also see Remark~\ref{rem:int-sigma}) is to be compared with the formula
	\begin{equation*}\label{eq:res-res}
		\Sigma 
		= \sum_{i=1}^{m} 2 \Re\fonci (S^* \otimes U^*)(\one\otimes\pi_i)
	\end{equation*}
	for the symbol for the initial state~$\omega_\Sigma$ of the environment~$\env$.

	Note that~$\Delta$ depends on the coupling constant~$\alpha$ through the matrix~$\MM$. In the small coupling limit $\alpha \to 0$, we have $\MMs \to W^*$ and thus
	\[
		\Delta \to \sum_{i=1}^{m} \|\pi_i v\|^2 2 \Re\fonci(W^*).
	\]
	In this regime, the asymptotic of the sample is completely determined by its free dynamics, the functions defining the initial state of the environment, and the ratios between the coupling with the different parts of the environment. This state is of course invariant under the free dynamics in the sample~$\sys$.

	Typically, for  $\psi_1, \psi_2 \in \h_\sys$,
	\begin{equation}
		(\omega_\Sigma \otimes \rho_\infty)(\tau(a^*(\psi_1)a(\psi_2))) \neq  \rho_\infty (a^*(\psi_1)a(\psi_2)).
	\end{equation}
	This is not surprising since it can be seen from the formula~\eqref{eq:one-aa-tau} that the reduced dynamics corresponding to one step of time evolution with a fresh environment only sees the first coefficient in the expansion for the state of the environment.
	However, equality holds for all $\psi_1, \psi_2 \in \h_\sys$ when the functions~$\fonci$ are constant\,---\,that is when there are no correlations in the environment~$\env$. Also, in any case, both sides converge to the same quantity in the limit~$\alpha \to 0$.

\begin{proof}[Proof of Theorem~\ref{thm:q-f}]
	It is sufficient to prove the result for obervables~$A$ which are monomials in creation and annihilation operators in the sample~$\sys$. The convergence in norm will then be immediate from the convergence of the matrix elements because we are working on a CAR for a finite number of degrees of freedom.

	We proceed in four steps. First, we show that the fact that the initial state is a tensor product of two even states implies that $A \mapsto (\omega_\Sigma \otimes \rho)(\tau^t(\one \otimes A))$ is an even state on~$\CAR(\h_\sys)$ for all $t\geq 0$. Then, we need only consider the asymptotic evolution of monomials of even degree in~$a$ and~$a^*$. We simplify the formula at the last step.

	Throughout the proof, whenever a product~$\prod_{i} a^{\sharp_i}(e_{k_i})$ appears in a formula, it is ordered with the term for~$i+1$ to the right of the term for~$i$.
	\begin{description}
		\item[Step 1: the asymptotic state is even.]  Since the dynamics~$(\tau^t)_{t \in \nn}$ preserves the total number of particles, $\tau$ maps monomials of odd degree in~$a$ and~$a^*$ to a linear combination of monomials of odd degree in~$a$,~$a^*$,~$b$ and~$b^*$. Then, each monomial is either of odd degree in~$a$ and~$a^*$ or of odd degree in~$b$ and~$b^*$. Since both~$\rho$ and~$\omega_\Sigma$ are even states, this implies
		\begin{equation*}
				(\omega_\Sigma \otimes \rho)\Big(\tau^t\Big(\prod_{i=1}^{2N+1} a^{\sharp_i}(\psi_i)\Big)\Big) = 0.
		\end{equation*}
		and thus
		\begin{equation}\label{eq:odd-state}
				\rho_\infty\Big(\prod_{i=1}^{2N+1} a^{\sharp_i}(\psi_i)\Big) = \lim_{t\to\infty}(\omega_\Sigma \otimes \rho)\Big(\tau^t\Big(\prod_{i=1}^{2N+1} a^{\sharp_i}(\psi_i)\Big)\Big) = 0.
		\end{equation}
		for any~$m \in \nn$ and any choices of $\psi_1, \dotsc, \psi_{2N+1} \in \h_\sys$.

		\item[Step 2: the asymptotic time evolution of monomials of even degree.]
		Using the appropriate formula from Proposition~\ref{prop:evol-quad} and the fact that $\tau^t$ is a morphism, we have
		\begin{equation}\label{eq:evol-pairs}
				\tau^t\Big(\prod_{i=1}^{2N} a^{\sharp_i}(\psi_i)\Big) = \prod_{i=1}^{2N} \Big( \sum_{t_i=1}^t\sin\alpha\, b^{\sharp_i}((S \otimes U)^{t_i-t}\iota W^* {\MMs}^{t_i-1}  \psi_i)  + O(t\|\MM^t\|)\Big).
		\end{equation}

		\item[Step 3: the asymptotic state is a gauge-invariant quasi-free state.] From the previous step, we have
		\begin{align*}
				\rho_\infty\Big(\prod_{i=1}^{2N} a^{\sharp_i}(\psi_i)\Big)&= \lim_{t\to\infty}(\omega_\Sigma \otimes \rho)\Big(\tau^t\Big(\prod_{i=1}^{2N} a^{\sharp_i}(\psi_i)\Big)\Big) \\
				&= \lim_{t\to\infty}\omega_\Sigma\Big(\prod_{i=1}^{2N} \sum_{t_i=1}^t\sin\alpha\, b^{\sharp_i}((S \otimes U)^{t_i-t}\iota W^* {\MMs}^{t_i-1}  \psi_i)\Big).
		\end{align*}
		Using the definition of~$\omega_\Sigma$ as a gauge-invariant quasi-free state with density~$\Sigma$,
		\begin{align*}
				&\rho_\infty(a^{*}(\psi_1) \dotsb a^{*}(\psi_N)a(\psi'_N) \dotsb a(\psi'_1)) \\
				&\qquad=
					 \lim_{t \to\infty}\det\Big[ \sum_{s',t'=1}^{t} \sin^2\alpha \\
					 &\qquad\qquad\qquad \Big\langle (S \otimes U)^{t'-t}\iota W^* {\MMs}^{t'-1}  \psi'_\nu, \Sigma (S \otimes U)^{s'-t}\iota W^* {\MMs}^{s'-1}  \psi_\mu \Big\rangle \Big]_{\mu,\nu=1}^{N}.
		\end{align*}
		Using $[\Sigma,S^* \otimes U^*]=0$, and omitting the details of the indexation of the matrix considered for the determinant,
		\begin{align*}
				&\rho_\infty(a^{*}(\psi_1) \dotsb a^{*}(\psi_N)a(\psi'_N) \dotsb a(\psi'_1)) \\
				&\qquad=
					 \lim_{t \to\infty}\!\det\!\Big[ \sum_{s',t'=1}^{t} \sin^2\alpha \,
					 \Big\langle \psi'_\nu, {\MM}^{t'-1}W \iota^* \Sigma (S \otimes U)^{s'-t'}\iota W^* {\MMs}^{s'-1}  \psi_\mu \Big\rangle \Big]_{\mu,\nu} \\
			 	&\qquad=
				  	\det\Big[ \Big\langle \psi'_\nu, \sum_{s',t'=1}^{\infty} \sin^2\alpha \\
				  	&\qquad\qquad  \braket{\delta_0 \otimes v, \Sigma (S^{s'-t'} \delta_0 \otimes U^{s'-t'}v) }{\MM}^{t'-1}W \iota^*\iota W^* {\MMs}^{s'-1}  \psi_\mu \Big\rangle \Big]_{\mu,\nu}.
		\end{align*}
		The fact that
		$$
			\rho_\infty(a^{*}(\psi_1) \dotsb a^{*}(\psi_N)a(\psi'_{N'}) \dotsb a(\psi'_1)) = 0
		$$
		whenever $N \neq N'$ follows from~\eqref{eq:odd-state} of Step~1 if $N+N'$ is odd, and from~\eqref{eq:evol-pairs} of Step~2 and the fact that~$\omega_\Sigma$ is a gauge-invariant quasi-free state if~$N+N'$ is even. We conclude that $\rho_\infty: A \mapsto \lim_{t\to\infty}(\omega_\Sigma \otimes \rho)(\tau^t(\one \otimes A))$ is a gauge-invariant quasi-free state with density
		\begin{equation*}
			\sum_{s',t'=1}^{\infty} \braket{\delta_0 \otimes v, \Sigma (S^{s'-t'} \delta_0 \otimes U^{s'-t'}v) }{\MM}^{t'-1}\sin^2\alpha\,W P W^* {\MMs}^{s'-1}.
		\end{equation*}

		\item[Step 4: the alternate formula.]  By definition,~$\MM=W(\one + (\cos\alpha - 1)P)$. Using the fact that~$W$ is unitary and basic trigonometry, we find the identity
		\[
			\sin^2\alpha\, W P W^* = \one - \MM \MMs.
		\]
		The formula~\eqref{eq:alt-delta} for~$\Delta$ given in the statement of the proposition then follows from a telescoping and reindexing of the summation. \qedhere
	\end{description}
\end{proof}

This knowledge of the asymptotic state in the sample~$\sys$ allows us to investigate the number of particles there. We give more details on the particle number at each node~$\nu$ of the graph in a more concrete example in Section~\ref{sec:ex}.

\begin{corollary}
	Under assumptions (i)--(iv), in the limit $t \to \infty$, the number of particles in the sample is distributed as a sum of~$d$ independent Bernoulli random variables with parameters~$\lambda_0, \dotsc, \lambda_{d-1} \in (0,1)$ that are the eigenvalues of the self-adjoint matrix~$\Delta$.
\end{corollary}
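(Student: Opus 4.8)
The plan is to determine the law of the particle number $N = \sum_{j=0}^{d-1}a^*(e_j)a(e_j)$ by computing its probability generating function directly from the gauge-invariant quasi-free structure of $\rho_\infty$ obtained in Theorem~\ref{thm:q-f}. Since $\rho_\infty$ is a state, its symbol satisfies $0 \le \Delta \le \one$, so $\Delta$ is self-adjoint and I may fix an orthonormal basis $(e_j)_{j=0}^{d-1}$ of $\h_\sys$ diagonalizing it, $\Delta e_j = \lambda_j e_j$ with $\lambda_j \in [0,1]$. Computing $N$ in this basis turns the occupation operators $n_j := a^*(e_j)a(e_j)$ into pairwise commuting projections ($n_j^2 = n_j$), which is what makes the generating function factorize.

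Concretely, from $n_j^2 = n_j$ one has the operator identity $z^N = \prod_{j=0}^{d-1}(\one + (z-1)n_j)$, and expanding the product gives
\[
  \rho_\infty(z^N) = \sum_{T \subseteq \{0,\dots,d-1\}} (z-1)^{|T|}\,\rho_\infty\Big(\prod_{j \in T} n_j\Big).
\]
For each finite $T = \{j_1 < \dots < j_k\}$, reordering into normal form and using that $\rho_\infty$ is gauge-invariant quasi-free with symbol $\Delta$ yields $\rho_\infty(\prod_{j\in T}n_j) = \det[\braket{e_{j_a},\Delta e_{j_b}}]_{a,b=1}^k$, which in the eigenbasis is the principal minor $\prod_{j\in T}\lambda_j$. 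Hence $\rho_\infty(z^N) = \sum_T (z-1)^{|T|}\prod_{j\in T}\lambda_j = \prod_{j=0}^{d-1}(1 + (z-1)\lambda_j)$, exactly the generating function of a sum of $d$ independent Bernoulli variables of parameters $\lambda_j$. As $N$ takes values in $\{0,\dots,d\}$, this identity of polynomials determines the distribution, proving the stated form.

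The \textbf{main obstacle} is the strict inclusion $\lambda_j \in (0,1)$, equivalently $0 < \Delta < \one$: membership in $[0,1]$ is automatic, but ruling out the endpoints requires genuine input and forces the strictness $0 < \Sigma < \one$ on the environment. For the lower bound I would start from the density computed in Steps~3--4 of the proof of Theorem~\ref{thm:q-f}, use $\sin^2\alpha\,WPW^* = \one - \MM\MMs$ together with the fact that $WPW^*$ is the rank-one projector onto $W\psi_*$, and reduce the quadratic form to
\[
  \braket{\phi,\Delta\phi} = \sin^2\alpha \int_{\ci} \big|\braket{W\psi_*,(\one - \zeta\MMs)^{-1}\phi}\big|^2\,\mu(\d\zeta),
\]
where $\mu$ is the positive spectral measure of $\Sigma$ relative to $\delta_0 \otimes v$ for $S \otimes U$, characterized by $\braket{\delta_0\otimes v,\Sigma(S\otimes U)^k(\delta_0\otimes v)} = \int_\ci \zeta^k\,\mu(\d\zeta)$. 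When $\Sigma > 0$ this $\mu$ has full support on $\ci$, so the integral vanishes only if the analytic resolvent matrix element vanishes identically, i.e. $\phi \perp \MM^n W\psi_*$ for every $n \ge 0$; the cyclicity of $\psi_*$ for $W$ in hypothesis~(iii) then has to be upgraded to cyclicity of $W\psi_*$ for $\MM$ to conclude $\phi = 0$ and hence $\Delta > 0$. For the upper bound I would exploit the particle--hole symmetry: the substitution $\Sigma \mapsto \one - \Sigma$ sends the correlation coefficients $\braket{\delta_0\otimes v,\Sigma(S\otimes U)^k(\delta_0\otimes v)}$ to $\delta_{k0}$ minus themselves and, after the same telescoping as in Step~4, sends $\Delta$ to $\one - \Delta$, so applying the positivity argument to $\one - \Sigma > 0$ gives $\Delta < \one$. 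The delicate points, where I expect the real work to lie, are establishing that a resolvent element vanishing $\mu$-almost everywhere must vanish identically (handled via the full support of $\mu$) and the promotion of cyclicity from $W$ to $\MM$.
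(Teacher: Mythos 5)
Your central computation is correct, and it takes a genuinely different route from the paper's. The paper first reduces, by a continuity argument, to the case $0 < \Delta < \one$, then invokes the explicit density-matrix representation $\rho_\infty = \det(\one - \Delta)\bigoplus_{p=0}^d \big(\Delta(\one-\Delta)^{-1}\big)^{\wedge p}$ and computes $\mathbf{P}(p)$ as a trace over the $p$-particle sector, recognizing the Poisson binomial mass function. You instead compute the probability generating function $\rho_\infty(z^N)$ directly: in an eigenbasis of $\Delta$ the occupations $n_j$ are commuting projections, $z^N = \prod_j(\one + (z-1)n_j)$, and quasi-freeness gives $\rho_\infty\big(\prod_{j\in T}n_j\big) = \prod_{j\in T}\lambda_j$ as a principal minor (your normal-ordering $\prod_{j\in T}n_j = a^*(e_{j_1})\dotsb a^*(e_{j_k})a(e_{j_k})\dotsb a(e_{j_1})$ indeed carries no sign and matches the paper's determinant convention), whence $\rho_\infty(z^N) = \prod_j(1+(z-1)\lambda_j)$, which pins down the law since $N$ takes finitely many values. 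What this buys is concrete: the identity is established for \emph{all} $0 \le \Delta \le \one$ in one stroke, with no need for invertibility of $\one - \Delta$ (which is what makes the paper's density-matrix formula literally meaningful) and hence no approximation step at all.

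Where you go astray is in the diagnosis of the ``main obstacle.'' The strict inclusion $\lambda_j \in (0,1)$ is not a consequence of hypotheses (i)--(iv) and cannot be proved under them: $\Sigma = 0$ is admissible (it commutes with everything and satisfies the decay condition) and gives $\fonci \equiv 0$, hence $\Delta = 0$; likewise $\Sigma = \one$ gives $\fonci \equiv \tfrac12$ and $\Delta = \sum_i \|\pi_i v\|^2\,\one = \one$. The paper does not prove strictness either: its opening reduction to $0 < \Delta < \one$ is not a claim that the endpoints cannot occur, but a device to make the density-matrix formula applicable, the endpoint cases being recovered by continuity of both sides of the distributional identity (with degenerate Bernoulli factors). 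So the interval in the statement should be read as $[0,1]$, and your generating-function argument already proves the corollary in that form. Your proposed repair\,---\,imposing $0 < \Sigma < \one$, passing to the spectral measure $\mu$ dominating a multiple of Lebesgue measure, and upgrading cyclicity of $\psi_*$ for $W$ to cyclicity of $W\psi_*$ for $\MM$\,---\,is a sketch of a different statement under strengthened hypotheses, and it is moreover left open at the cyclicity step; it should simply be dropped, as nothing in the corollary requires it.
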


\begin{proof}
	By a standard continuity argument~(see {e.g.}~\cite[\S{IV.A}]{DFP08}), we need only consider the case~$0 < \Delta < \one$.

	By standard results on \textsc{giqf} states (see {e.g.}~\cite[\S{4.7.3}]{JOPP} or~\cite[\S{IV.A}]{DFP08}), the quasi-free state is associated to the density matrix
	$$
		\rho_\infty = \det(\one - \Delta) \bigoplus_{p=0}^d \Big(\frac{\Delta}{\one-\Delta}\Big)^{\wedge p}.
	$$
	Therefore, the probability of observing $p$ particles in the sample is given by
	\begin{align*}
		\mathbf{P}(p)
			&= \det(\one - \Delta) \tr \Big(\Big(\frac{\Delta}{\one-\Delta}\Big)^{\wedge p}\Big).
	\end{align*}
	Diagonalizing~$\Delta$, labeling its eigenvalues~$\lambda_0, \dots, \lambda_{d-1}$ and using cyclicity of the trace yields
	\begin{align*}
		\mathbf{P}(p)
			&= (1-\lambda_0) \dotsb (1-\lambda_{d-1}) \sum_{k_1 < \dotsb < k_p} \frac{\lambda_{k_1} \dotsb \lambda_{k_d}}{(1-\lambda_{k_1}) \dotsb (1-\lambda_{k_d})} \\
			&= \sum_{k_1 < \dotsb < k_p} {\lambda_{k_1} \dotsb \lambda_{k_d}} \prod_{j\notin\{k_n\}_{m=1}^p} (1 - \lambda_j).
	\end{align*}
	This probability mass function is precisely that of a sum of~$d$ independent Bernoulli random variables~$(X_i)_{i=1}^d$ where $X_i \sim \operatorname{Ber}(\lambda_i)$, also known as Poisson binomial distribution of parameter~$(\lambda_0, \dotsc, \lambda_{d-1})$.
\end{proof}

\subsection{Flux observables}

Recall that~$U$ is a unitary on~$\cc^m$ and~$\{\pi_i\}_{i=1}^m$ is its set of spectral rank-one projectors associated to distinct eigenvalues~$\{\Exp{\ii\gamma_i}\}_{i=1}^m$. The commutation relations $$[\Sigma, \one\otimes \pi_i] = 0$$ for each~$i$
suggest a decomposition of the Hilbert space~$\h_\env$ into~$m$ infinte-dimensional subspaces, in which we can formally count the number of particles. Lemma~\ref{lem:one-tau} allows us to formally compute the difference
\[
  \sum_{\ell \in \zz} \tau(b^*(\delta_\ell \otimes x_i)b(\delta_\ell \otimes x_i)) - b^*(\delta_\ell \otimes x_i)b(\delta_\ell \otimes x_i)
\]
in the number of particles in those subspaces between two time steps.
The result of this computation yields a bona fide bounded self-adjoint operator and we define
\begin{align*}
	\Phi_i &:=  (\cos\alpha-1)^2\|\pi_i v\|^2 b^*(\delta_0 \otimes v)b(\delta_0 \otimes v) \\
		&\qquad
		+ (\cos\alpha - 1)b^*(\delta_0 \otimes \pi_i v)b(\delta_0 \otimes v)  + (\cos\alpha - 1)b^*(\delta_0 \otimes v)b(\delta_0 \otimes \pi_i v) \\
		&\qquad
		+\ii\sin\alpha\, (\cos\alpha-1)\|\pi_i v\|^2\big( b(\delta_0 \otimes v)a^*(\psi_*) - b^*(\delta_0 \otimes v)a(\psi_*) \big) \\
		&\qquad
		+ \ii\sin\alpha\,\big(b(\delta_0 \otimes \pi_i v) a^*(\psi_*)  - b^*(\delta_0 \otimes \pi_i v) a(\psi_*)\big) \\
		&\qquad
		+ \sin^2\alpha\,\|\pi_i v\|^2a^*(\psi_*)a(\psi_*),
\end{align*}
the flux observable into the $i$th subreservoir, $\env_i$, accordingly. Summing over~$i=1,2,\dotsc,m$, we get the observable
\begin{align*}
	\Phi_\env &:= (\cos^2\alpha - 1) b^*(\delta_0 \otimes v)b(\delta_0 \otimes v) + \sin^2\alpha\,a^*(\psi_*)a(\psi_*) \\
		&\qquad
		+\ii\sin\alpha\,\cos\alpha\,\big( b(\delta_0 \otimes v)a^*(\psi_*) - b^*(\delta_0 \otimes v)a(\psi_*) \big)
		,
\end{align*}
for the flux into the whole environment. This is in agreement\footnote{{up to a change of sign of the coupling constant~$\alpha$}} with formula~(28) in~\cite{HJ17} and with a similar computation that can be done for the flux out of the sample. Of course, $\lim_{t\to\infty}(\omega_\Sigma \otimes \rho) \tau^t(\Phi_\env) = 0$.

In order to compute the limiting expectation of the flux operators~$\Phi_1, \dotsc, \Phi_m$, we need the long-time evolution of the quadratic monomials appearing in the defining formula. The following result follows directly from Proposition~\ref{prop:evol-quad} and the definition of~$\omega_\Sigma$.

\begin{corollary}\label{cor:limit-pairs-state}
		Under assumptions (i)--(iv), we have
		\begin{enumerate}
			\item for all $\psi_1,\psi_2 \in \h_\sys$,
			\begin{equation*}
			\begin{split}
				&\lim_{t\to\infty}(\omega_\Sigma \otimes \rho)\tau^t(a^*(\psi_1) a(\psi_2)) \\
					&\qquad = \sum_{s',t'=1}^\infty \sin^2\alpha\,\braket{S^{t'}\delta_0 \otimes U^{t'}v, \Sigma (S^{s'}\delta_0 \otimes U^{s'}v)}
          \braket{\psi_2, {\MM}^{t'-1}W P W^* {\MMs}^{s'-1} \psi_1} ;
			\end{split}
			\end{equation*}
			\item for all $\varphi_1 \in \h_\env^+$ and $\psi_2 \in \h_\sys$,
			\begin{equation*}
			\begin{split}
				&\lim_{t\to\infty}(\omega_\Sigma \otimes \rho)\tau^t(b^*(\varphi_1) a(\psi_2))
          = -\ii\sin\alpha\, \sum_{t'=1}^\infty \braket{S^{t'}\delta_0 \otimes U^{t'}, \Sigma \varphi_1} \overline{\braket{\psi_*, W^* {\MMs}^{t'-1} \psi_2}};
			\end{split}
			\end{equation*}
			\item for all $\varphi_1,\varphi_2 \in \h_\env^+$,
			\begin{equation*}
				\lim_{t\to\infty} (\omega_\Sigma \otimes \rho)\tau^t(b^*(\varphi_1) b(\varphi_2))
					= \braket{\varphi_2, \Sigma \varphi_1}.
			\end{equation*}
		\end{enumerate}
\end{corollary}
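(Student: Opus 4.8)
The plan is to feed each of the three quadratic monomials through Proposition~\ref{prop:evol-quad} with the specific choice $\sharp_1={*}$ and $\sharp_2$ an annihilation, and then to evaluate the product state $\omega_\Sigma \otimes \rho$ on the results. The governing observation is that, after a single pass through the proposition, every operator still present is a creation or annihilation operator in the environment: no sample operators are left, so $\rho$ contributes only the factor $\rho(\one)=1$ and the whole computation collapses onto the two-point function of the \textsc{giqf} state $\omega_\Sigma$, i.e.\ the defining formula that pairs the two environment arguments through $\Sigma$. For the third monomial $b^*(\varphi_1)b(\varphi_2)$ this is immediate from part~3 of Proposition~\ref{prop:evol-quad}, which carries no error term and no genuine $t$-dependence once the commutation relation is used; for the first two it requires a limit.

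First I would discard the remainders. Parts~1 and~2 of Proposition~\ref{prop:evol-quad} express $\tau^t$ of the monomial as a finite double (resp.\ single) sum of environment bilinears plus a term of size $O(t\|\MM^t\|)$. Since $\omega_\Sigma\otimes\rho$ is a state, hence of norm one, this remainder is bounded by a constant times $t\|\MM^t\|$, which tends to $0$ by Lemma~\ref{lem:spr}. Next I would substitute the rank-one identities $\iota\psi=(\delta_0\otimes v)\braket{\psi_*,\psi}$ and $P=\iota^*\iota$ to extract the scalar sample amplitudes $\braket{\psi_*,W^*{\MMs}^{k}\psi_i}$ out of each inner product, leaving overlaps of the environment vectors $(S\otimes U)^{\ell}(\delta_0\otimes v)$ against $\Sigma$.

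The crucial simplification is the invariance $[\Sigma,S\otimes U]=0$ from hypothesis~(i). In the raw output of the proposition every summand still carries the time $t$ through powers $(S\otimes U)^{s'-t}$, $(S\otimes U)^{t'-t}$ and $(S\otimes U)^{-t}$; using unitarity of $S\otimes U$ and the commutation relation, these powers telescope and the explicit $t$ disappears, so that each summand becomes a fixed number independent of $t$. The environment overlap then reduces to $\braket{S^{t'}\delta_0\otimes U^{t'}v,\Sigma(S^{s'}\delta_0\otimes U^{s'}v)}$ in part~1, to $\braket{S^{t'}\delta_0\otimes U^{t'}v,\Sigma\varphi_1}$ in part~2, and to $\braket{\varphi_2,\Sigma\varphi_1}$ in part~3. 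Because the summand no longer depends on $t$, the finite sum over $1\le s',t'\le t$ is exactly the truncation of the claimed series, and passing to the limit is just summing it. On the sample side I would recombine the extracted amplitudes together with $WPW^*$ into the single matrix element $\braket{\psi_2,\MM^{t'-1}WPW^*{\MMs}^{s'-1}\psi_1}$ of part~1, and into $\overline{\braket{\psi_*,W^*{\MMs}^{t'-1}\psi_2}}$ in part~2.

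The two points I expect to need genuine care are convergence and the conjugations, rather than the algebra. Convergence of the series follows from Lemma~\ref{lem:spr}: each sample amplitude carries a power $\MM^{t'-1}$ or ${\MMs}^{s'-1}$ whose norm decays exponentially and therefore dominates the uniformly bounded environment overlaps, giving absolute convergence of the single and double sums (the decay hypothesis~\eqref{eq:decay} would also suffice but is not required here). As for the conjugations, I would track which argument of the two-point function of $\omega_\Sigma$ is antilinear: the amplitude coming from the annihilation argument is conjugated, which is precisely the origin of the bar on $\braket{\psi_*,W^*{\MMs}^{t'-1}\psi_2}$ in part~2 and of the ordering $\braket{\varphi_2,\Sigma\varphi_1}$ in part~3.
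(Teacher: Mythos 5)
Your proposal is correct and is exactly the argument the paper intends: the paper offers no written proof beyond the remark that the corollary ``follows directly from Proposition~\ref{prop:evol-quad} and the definition of~$\omega_\Sigma$'', and your write-up\,---\,apply the proposition with $\sharp_1 = {*}$ and $\sharp_2$ an annihilation, discard the $O(t\|\MM^t\|)$ remainders via Lemma~\ref{lem:spr}, use the rank-one structure of~$\iota$ together with $[\Sigma, S\otimes U]=0$ and unitarity to eliminate the explicit $t$-dependence, then evaluate the quasi-free two-point function and sum the absolutely convergent series\,---\,is precisely that computation carried out carefully. Your attention to which slot is antilinear is also the right call: placing the annihilation argument in the conjugated slot is what produces the bar on $\braket{\psi_*, W^*{\MMs}^{t'-1}\psi_2}$ and the ordering $\braket{\varphi_2,\Sigma\varphi_1}$ in the statement, consistent with how the paper itself uses the two-point function in the proof of Theorem~\ref{thm:q-f}.
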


We may then compute the asymptotics of the flux $\Phi_i$ into the $i$th part of the environment. If the coupling constant $\alpha$ is small enough, we may then determine the sign of this flux by comparing $\fonci(1)$ with the different $\foncj(1)$, $j \neq i$, weighted by the appropriate scalar products; see the remark below.

\begin{proposition}\label{prop:flux}
	Under assumptions (i)--(iv), we have
	\begin{equation*}
		\begin{split}
			&\lim_{t\to\infty}(\omega_\Sigma \otimes \rho)\tau^t(\Phi_i) \\
				&=  (2-2\cos\alpha) \big(\|\pi_i v\|^2\braket{\delta_0 \otimes v, \Sigma(\delta_0 \otimes v)} - \braket{\delta_0 \otimes \pi_i v, \Sigma(\delta_0 \otimes \pi_i v)}\big) \\
				&\qquad
				+2\Re\sin^2\alpha\,\sum_{t'=1}^\infty \braket{\psi_*,  {\MM}^{t'-1} W \psi_*} \\
					&\qquad\qquad \big(\|\pi_i v\|^2\braket{S^{t'}\delta_0\otimes U^{t'}v ,\Sigma(\delta_0 \otimes v)} - \braket{S^{t'}\delta_0 \otimes U^{t'}\pi_i v, \Sigma(\delta_0 \otimes \pi_i v)}\big) .
	\end{split}
	\end{equation*}
\end{proposition}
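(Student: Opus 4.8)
The plan is to expand $\Phi_i$ into its six quadratic monomials in the $a^\sharp$'s and $b^\sharp$'s, apply $(\omega_\Sigma\otimes\rho)\circ\tau^t$ to each, pass to the limit term by term using the three formulae of Corollary~\ref{cor:limit-pairs-state}, and then rearrange the sum of the six limits into the announced form. Two of the monomials, namely those of the shape $b(\delta_0\otimes w)a^*(\psi_*)$ with $w\in\{v,\pi_i v\}$, are the adjoints of $b^*(\delta_0\otimes w)a(\psi_*)$ (the $a^\sharp$'s and $b^\sharp$'s commute, being on different tensor factors), so since $\omega_\Sigma\otimes\rho$ is a state and $\tau$ a $*$-automorphism one has $(\omega_\Sigma\otimes\rho)(\tau^t(X^*))=\overline{(\omega_\Sigma\otimes\rho)(\tau^t(X))}$; these are handled by part~2 of the corollary with $\varphi_1=\delta_0\otimes w\in\h_\env^+$ and $\psi_2=\psi_*$. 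It is convenient to split the six terms into the group carrying the weight $\|\pi_i v\|^2$ and involving only $v$ (the monomials $b^*(\delta_0\otimes v)b(\delta_0\otimes v)$, $\ii\sin\alpha(\cos\alpha-1)(b(\delta_0\otimes v)a^*(\psi_*)-b^*(\delta_0\otimes v)a(\psi_*))$ and $a^*(\psi_*)a(\psi_*)$), and the group involving $\pi_i v$.

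The genuine work is the double series produced by $\sin^2\alpha\,\|\pi_i v\|^2\,a^*(\psi_*)a(\psi_*)$ through part~1 of Corollary~\ref{cor:limit-pairs-state}. Here I would use three simplifications. First, $[\Sigma,S\otimes U]=0$ makes the correlation factor $\braket{S^{t'}\delta_0\otimes U^{t'}v,\Sigma(S^{s'}\delta_0\otimes U^{s'}v)}$ depend only on $s'-t'$. Second, $WPW^*$ is the rank-one projection onto $W\psi_*$, so the factor $\braket{\psi_*,\MM^{t'-1}WPW^*{\MMs}^{s'-1}\psi_*}$ factorizes as $\braket{\psi_*,\MM^{t'-1}W\psi_*}\,\overline{\braket{\psi_*,\MM^{s'-1}W\psi_*}}$. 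Third, and crucially, I would invoke the identity $\sin^2\alpha\,WPW^*=\one-\MM\MMs$ from Step~4 of the proof of Theorem~\ref{thm:q-f}, so that $\sin^2\alpha\,\braket{\psi_*,\MM^{t'-1}WPW^*{\MMs}^{s'-1}\psi_*}=\braket{\psi_*,\MM^{t'-1}{\MMs}^{s'-1}\psi_*}-\braket{\psi_*,\MM^{t'}{\MMs}^{s'}\psi_*}$. Because the correlation factor is constant along each diagonal $s'-t'=k$, summing over a diagonal telescopes (using $\spr\MM<1$ from Lemma~\ref{lem:spr} for convergence and vanishing of the tail), collapsing the double series to the single series $\braket{\delta_0\otimes v,\Sigma(\delta_0\otimes v)}+2\Re\sum_{k\geq1}\braket{\delta_0\otimes v,\Sigma(S^{k}\delta_0\otimes U^{k}v)}\,\overline{\braket{\psi_*,\MM^{k}\psi_*}}$. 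Finally, the relation $\MM\psi_*=\cos\alpha\,W\psi_*$, immediate from $\MM=W(\one+(\cos\alpha-1)P)$ and $P\psi_*=\psi_*$, turns $\braket{\psi_*,\MM^{k}\psi_*}$ into $\cos\alpha\,\braket{\psi_*,\MM^{k-1}W\psi_*}$, producing exactly the factor $\braket{\psi_*,\MM^{t'-1}W\psi_*}$ of the statement.

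It then remains to add the six contributions. In the $v$-group, the term $b^*(\delta_0\otimes v)b(\delta_0\otimes v)$ contributes the constant $(\cos\alpha-1)^2\|\pi_i v\|^2\braket{\delta_0\otimes v,\Sigma(\delta_0\otimes v)}$ via part~3, the collapsed $a^*(\psi_*)a(\psi_*)$ contributes $\sin^2\alpha\,\|\pi_i v\|^2$ times the single series above, and the antisymmetric $b a^*$-$b^* a$ term contributes $\sin^2\alpha(1-\cos\alpha)\|\pi_i v\|^2$ times the analogous single series via part~2. The constant parts combine through $(\cos\alpha-1)^2+\sin^2\alpha=2-2\cos\alpha$, while the two single series combine through the prefactors $\cos\alpha$ and $1-\cos\alpha$ summing to $1$, leaving the clean factor $\sin^2\alpha$. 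In the $\pi_i$-group, the two $b^*b$ terms reduce—via $[\Sigma,\one\otimes\pi_i]=0$, $\pi_i^2=\pi_i$ and $\pi_i^*=\pi_i$, which let me replace each cross term such as $\braket{\delta_0\otimes v,\Sigma(\delta_0\otimes\pi_i v)}$ by the diagonal $\braket{\delta_0\otimes\pi_i v,\Sigma(\delta_0\otimes\pi_i v)}$—to the constant $-(2-2\cos\alpha)\braket{\delta_0\otimes\pi_i v,\Sigma(\delta_0\otimes\pi_i v)}$, while the antisymmetric $\pi_i v$ term gives the $\pi_i$ single-series term through part~2 and the same commutation (now also using that $\pi_i$ commutes with $U$). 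Collecting yields the stated formula.

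The main obstacle is this middle step: correctly telescoping the double series and tracking the powers of $\sin\alpha$ and $\cos\alpha$ so that the final prefactor collapses to a single $\sin^2\alpha$. The substitutions $\sin^2\alpha\,WPW^*=\one-\MM\MMs$ and $\MM\psi_*=\cos\alpha\,W\psi_*$ are precisely what reconcile the $\sin^4\alpha$ and $(\cos\alpha-1)$ factors naively appearing in the $a^*(\psi_*)a(\psi_*)$ and $b a^*$-$b^* a$ contributions with the single $\sin^2\alpha$ in the statement; reassuringly, $\MM\psi_*=\cos\alpha\,W\psi_*$ needs no inversion, so the degenerate case $\cos\alpha=0$ requires no separate treatment. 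Absolute convergence of all series, needed to justify both the term-by-term limits and the diagonal rearrangement, follows from the decay assumption~\eqref{eq:decay} on $\Sigma$ together with the exponential decay of $\|\MM^t\|$ from Lemma~\ref{lem:spr}.
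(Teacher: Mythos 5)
Your proposal is correct and follows essentially the same route as the paper's proof: term-by-term application of Corollary~\ref{cor:limit-pairs-state} to the six monomials of $\Phi_i$, reduction of the cross terms via $[\Sigma,\one\otimes\pi_i]=0$, collapse of the double series coming from $a^*(\psi_*)a(\psi_*)$ through the identity $\sin^2\alpha\,WPW^*=\one-\MM\MMs$, and final recombination using $\MM\psi_*=\cos\alpha\,W\psi_*$ together with $(\cos\alpha-1)^2+\sin^2\alpha=2-2\cos\alpha$. Your explicit diagonal-by-diagonal telescoping (justified by absolute convergence from~\eqref{eq:decay} and Lemma~\ref{lem:spr}) simply spells out what the paper compresses into the phrase ``the identity $\sin^2\alpha\,WPW^*=\one-\MM\MMs$ then implies''.
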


\begin{remark}
	For $\alpha \ll 1$, we have $2 - 2 \cos \alpha = \alpha^2 + O(\alpha^3)$ and hence
	\begin{equation*}
		\begin{split}
			\lim_{\alpha\to0}\lim_{t\to\infty}\frac{(\omega_\Sigma \otimes \rho)\tau^t(\Phi_i)}{\alpha^2}
				&=
					\|\pi_i v\|^2(1-\|\pi_i v\|^2)
          2\Re \Big(\Big(\sum_{j \neq i} \tfrac{\|\pi_j v\|^2}{1-\|\pi_i v\|^2} \foncj(1) \Big) -\fonci(1)\Big),
	\end{split}
	\end{equation*}
	assuming~$v \neq \pi_i v$ for each~$i$. Therefore, for small enough coupling, the sign of the flux into the $i$th subreservoir is given by that of
	\[
		\Big(\sum_{j \neq i} \tfrac{\|\pi_j v\|^2}{1-\|\pi_i v\|^2} \Re\foncj(1) \Big) - \Re\fonci(1),
	\]
	which is simply the sign of
	$
		\Re\foncj(1) - \Re\fonci(1)
	$
	in the case of two subreservoirs. This gives interpretation of the number $2\Re\fonci(1)$. Recall that~$2\fonci(0)$ is the average particle density in the $i$th sub reservoir.
\end{remark}

\begin{proof}[Proof of Proposition~\ref{prop:flux}]
	From the definition of~$\Phi_i$
  {
	\begin{align*}
		\lim_{t\to\infty}(\omega_\Sigma \otimes \rho)\tau^t(\Phi_i)
			&=  (\cos\alpha-1)^2\|\pi_i v\|^2 \lim_{t\to\infty}(\omega_\Sigma \otimes \rho)\tau^t(b^*(\delta_0 \otimes v)b(\delta_0 \otimes v)) \\
			&\qquad
			+ (\cos\alpha - 1)\lim_{t\to\infty}(\omega_\Sigma \otimes \rho)\tau^t(b^*(\delta_0 \otimes \pi_i v)b(\delta_0 \otimes v))  \\
			&\qquad
			+ (\cos\alpha - 1)\lim_{t\to\infty}(\omega_\Sigma \otimes \rho)\tau^t(b^*(\delta_0 \otimes v)b(\delta_0 \otimes \pi_i v)) \\
			&\qquad
			+\ii\sin\alpha\, (\cos\alpha-1)\|\pi_i v\|^2
        \lim_{t\to\infty}(\omega_\Sigma \otimes \rho)\tau^t(b(\delta_0 \otimes v)a^*(\psi_*)) \\
			&\qquad
			-\ii\sin\alpha\, (\cos\alpha-1)\|\pi_i v\|^2
        \lim_{t\to\infty}(\omega_\Sigma \otimes \rho)\tau^t(b^*(\delta_0 \otimes v)a(\psi_*))  \\
			&\qquad
			+ \ii\sin\alpha \lim_{t\to\infty}(\omega_\Sigma \otimes \rho)\tau^t(b(\delta_0 \otimes \pi_i v) a^*(\psi_*))   \\
			&\qquad
			- \ii\sin\alpha\lim_{t\to\infty}(\omega_\Sigma \otimes \rho)\tau^t(b^*(\delta_0 \otimes \pi_i v) a(\psi_*)) \\
			&\qquad
			+ \sin^2\alpha\,\|\pi_i v\|^2\lim_{t\to\infty}(\omega_\Sigma \otimes \rho)\tau^t(a^*(\psi_*)a(\psi_*)).
	\end{align*}
  }
  Using Corollary~\ref{cor:limit-pairs-state} for each term gives
  {
	\begin{align*}
		&\lim_{t\to\infty}(\omega_\Sigma \otimes \rho)\tau^t(\Phi_i) \\
			&=  (\cos\alpha-1)^2\|\pi_i v\|^2 \braket{\delta_0 \otimes v, \Sigma(\delta_0 \otimes v)}
			+ 2  (\cos\alpha - 1)\Re\braket{\delta_0 \otimes v, \Sigma(\delta_0 \otimes \pi_i v)} \\
			&\qquad
			-2\sin^2\alpha\, (\cos\alpha-1)\|\pi_i v\|^2
        \Re\sum_{t'=1}^\infty \overline{\braket{\psi_*, W^* {\MMs}^{t'-1}\psi_*}}\braket{S^{t'}\delta_0\otimes U^{t'}v ,\Sigma(\delta_0 \otimes v)}  \\
			&\qquad
			-2\sin^2\alpha\,
      \Re\sum_{t'=1}^\infty \overline{\braket{\psi_*, W^*{\MMs}^{t'-1}\psi_*}} \braket{S^{t'}\delta_0 \otimes U^{t'}v, \Sigma (\delta_0 \otimes \pi_i v)} \\
			&\qquad
			+ \sin^4\alpha\,\|\pi_i v\|^2
      \\
      &\qquad\qquad
      \sum_{s',t'=1}^\infty \braket{\delta_0 \otimes v, \Sigma (S^{s'-t'} \delta_0 \otimes U^{s'-t'}v) }
				\braket{\psi_*,{\MM}^{t'-1}W \iota^*\iota W^* {\MMs}^{s'-1}\psi_*}.
	\end{align*}
  }
	Using the commutation relation $[\Sigma, \one\otimes \pi_i ] = 0$, we have
  {
	\begin{align*}
		&\lim_{t\to\infty}(\omega_\Sigma \otimes \rho)\tau^t(\Phi_i) \\
			&=  (\cos\alpha-1)^2\|\pi_i v\|^2 \braket{\delta_0 \otimes v, \Sigma(\delta_0 \otimes v)}
			+ 2 (\cos\alpha - 1) \braket{\delta_0 \otimes \pi_i v, \Sigma(\delta_0 \otimes \pi_i v)} \\
			&\qquad
			-2\sin^2\alpha\, (\cos\alpha-1)\|\pi_i v\|^2\Re\sum_{t'=1}^\infty \braket{\psi_*,  {\MM}^{t'-1} W \psi_*} \braket{S^{t'}\delta_0\otimes U^{t'}v ,\Sigma(\delta_0 \otimes v)}  \\
			&\qquad
			-2\sin^2\alpha\, \Re\sum_{t'=1}^\infty \braket{\psi_*, {\MM}^{t'-1} W \psi_*} \braket{S^{t'}\delta_0 \otimes U^{t'} \pi_i v, \Sigma(\delta_0 \otimes \pi_i v)} \\
			&\qquad
			+ \sin^4\alpha\,\|\pi_i v\|^2
      \\
      &\qquad\qquad
        \sum_{s',t'=1}^\infty \braket{\delta_0 \otimes v, \Sigma (S^{s'-t'} \delta_0 \otimes U^{s'-t'}v) }
				\braket{\psi_*,{\MM}^{t'-1}W P W^* {\MMs}^{s'-1}\psi_*}.
	\end{align*}
  }
	The identity
	$
		\sin^2\alpha\, W P W^* = \one - \MM\MMs
	$
	then implies
  {
	\begin{align*}
		&\lim_{t\to\infty}(\omega_\Sigma \otimes \rho)\tau^t(\Phi_i) \\
			&\qquad=  (\cos\alpha-1)^2\|\pi_i v\|^2 \braket{\delta_0 \otimes v, \Sigma(\delta_0 \otimes v)}
			+ 2 (\cos\alpha - 1) \braket{\delta_0 \otimes \pi_i v, \Sigma(\delta_0 \otimes \pi_i v)} \\
			&\qquad\qquad
			-2\sin^2\alpha\, (\cos\alpha-1)\|\pi_i v\|^2
        \Re\sum_{t'=1}^\infty \braket{\psi_*,  {\MM}^{t'-1} W \psi_*} \braket{S^{t'}\delta_0\otimes U^{t'}v ,\Sigma(\delta_0 \otimes v)}  \\
			&\qquad\qquad
			-2\sin^2\alpha\, \Re\sum_{t'=1}^\infty \braket{\psi_*, {\MM}^{t'-1} W \psi_*} \braket{S^{t'}\delta_0 \otimes U^{t'}\pi_i v, \Sigma(\delta_0 \otimes \pi_i v)} \\
			&\qquad\qquad
			+\sin^2\alpha\,\|\pi_i v\|^2 \braket{\delta_0 \otimes v, \Sigma \delta_0 \otimes v) }\\
			&\qquad\qquad
			+ 2\sin^2\alpha\,\|\pi_i v\|^2 \Re\sum_{t'=1}^\infty \braket{S^{t'}\delta_0 \otimes U^{t'}v, \Sigma (\delta_0 \otimes v)}
				\braket{\psi_*,{\MM}^{t'}\psi_*}.
	\end{align*}
  }
	Using the identities
	$
		\sin^2\alpha + (\cos\alpha-1)^2 = 2 - 2\cos\alpha
	$
	and
	$
		\cos\alpha\,W \psi_* = \MM \psi_*,
	$
	we conclude
	\begin{align*}
		&\lim_{t\to\infty}(\omega_\Sigma \otimes \rho)\tau^t(\Phi_i) \\
			&\qquad=  -2(\cos\alpha-1)\|\pi_i v\|^2 \braket{\delta_0 \otimes v, \Sigma(\delta_0 \otimes v)}
			\\
			&\qquad\qquad
			+ 2 (\cos\alpha - 1) \braket{\delta_0 \otimes \pi_i v, \Sigma(\delta_0 \otimes \pi_i v)} \\
			&\qquad\qquad
			+2\sin^2\alpha\,\|\pi_i v\|^2\Re\sum_{t'=1}^\infty \braket{\psi_*,  {\MM}^{t'-1} W \psi_*} \braket{S^{t'}\delta_0\otimes U^{t'}v ,\Sigma(\delta_0 \otimes v)}  \\
			&\qquad\qquad
			-2\sin^2\alpha\, \Re\sum_{t'=1}^\infty \braket{\psi_*, {\MM}^{t'-1} W \psi_*} \braket{S^{t'}\delta_0 \otimes U^{t'}\pi_i v, \Sigma(\delta_0 \otimes \pi_i v)}.
	\qedhere
	\end{align*}
\end{proof}

\section{Examples} \label{sec:ex}

To give more detailed information on the profile of the particle density on the graph and to give further interpretation of other related quantities, we restrict our attention to simple models in which $m=1$.

\subsection{Quantum walks on a ring}

Consider the first example, the case of spin-$\tfrac 12$ quantum walkers on a cycle of~$n$ vertices. We use the shorthands~$e_{\nu,\pm}$ for $\delta_\nu \otimes e_{\pm 1}$ and~$n_{\nu}$ for~$a^*(e_{\nu,+}) a(e_{\nu,+}) + a^*(e_{\nu,-}) a(e_{\nu,-})$.
Under the hypotheses of our results, the particle density at the vertex~$\nu$,
\[
  p_t(\nu) := (\omega_\Sigma \otimes \rho) \tau^t( n_\nu),
\]
converges to
\[
  \rho_\infty(n_\nu)= \braket{e_{\nu,+}, \Delta e_{\nu,+}} + \braket{e_{\nu,-}, \Delta e_{\nu,-}}
\]
as $t\to\infty$. We refer to the function of the node~$\nu$ defined by this limit as the profile~$p$.

Note that the block structure of~$\MM$ implies $\braket{e_{\nu,\pm}, \MM^{2j+1} e_{\nu,\pm}} = 0$ for each $j \in \nn$.
Therefore, the asymptotic profile~$p$ of the particle density in the sample is independent of the odd coefficients in the series~\eqref{eq:sigma-plus} describing the symbol~$\Sigma$ (the initial state in the reservoir) via~\eqref{eq:sigma-series}.

We can also consider the position correlations
\begin{align*}
	C_t(\nu,\upsilon) &= (\omega_\Sigma \otimes \rho)(\tau^t(n_{\nu}n_{\upsilon})) - (\omega_\Sigma \otimes \rho)(\tau^t(n_{\nu}))(\omega_\Sigma \otimes \rho)(\tau^t(n_{\upsilon})).
\end{align*}
In the large time limit, a standard computation shows
\begin{align*}
	\lim_{t\to\infty}C_t(\nu,\upsilon) &= \rho_\infty(n_\nu n_\upsilon) - \rho_\infty(n_\nu) \rho_\infty(n_\upsilon)\\
  &= -\sum_{\tau_\nu,\tau_\upsilon \in \{+,-\}}|\braket{e_{\nu,\tau_\nu}, \Delta e_{\upsilon,\tau_\upsilon}}|^2
\end{align*}
for $\nu \neq u$. Note the definite sign.

\begin{example}
	Consider the case where each coin unitary is a rotation matrix of angle $\theta_\nu \notin \tfrac{\pi}{2}\zz$, {i.e.} $C_\nu =(\begin{smallmatrix}\cos\theta_\nu & -\sin\theta_\nu \\ \sin\theta_\nu & \cos\theta_\nu \end{smallmatrix})$. If the series~\eqref{eq:sigma-plus} terminates after the quadratic term ($\ell=2$), we can compute explicitly
	$$
		{p(\nu)} =
		\begin{cases}
      2F(0) - \Re F^{(2)}(0) (\cos\alpha \sin\theta_{n} \sin\theta_1 + \sin\theta_1 \sin\theta_2) &  1 = \nu  \\
			2F(0) - \Re F^{(2)}(0) (\sin\theta_{\nu-1}\sin\theta_\nu + \sin\theta_\nu \sin\theta_{\nu+1}) &  1 < \nu < n \\
			2F(0) - \Re F^{(2)}(0) (\sin\theta_{n-1} \sin\theta_{n} + \cos\alpha \sin\theta_{n} \sin\theta_{1}) &  \nu= n
		\end{cases}.
	$$
	In the limit $\alpha\to0$, we simply get
	$$
		\lim_{\alpha\to 0}{p(\nu)} =
			2F(0) - (\Re F^{(2)}(0)) (\sin\theta_{\nu-1}\sin\theta_\nu + \sin\theta_\nu \sin\theta_{\nu+1}).
	$$
\end{example}

\subsection{A Large sample with disorder}

We wish to consider a large ring with the coin matrix $C_\nu$ at each vertex~$\nu$ independently sampled from a common distribution. To this end, we consider a probability measure~$\mu$ on~$[0,2\pi]$ and introduce the product measure $(\mu\times\mu)^{\times \zz}$ on the product space~$\Omega := ([0,2\pi] \times [0,2\pi])^{\zz}$. We denote elements of~$\Omega$ in the form $\omega = (\omega^+_\nu, \omega^-_\nu)_{\nu\in\zz}$. We also fix real numbers~$t$ and $r$ with $tr \neq 0$ and $t^2 + r^2 = 1$.

Then, according to a random element~$\omega$ we set $$C_\nu(\omega) = \begin{pmatrix} \Exp{-\ii\omega_\nu^+} t & -\Exp{-\ii\omega_\nu^+} r \\ \Exp{-\ii\omega_\nu^-} r & \Exp{-\ii\omega_\nu^-} t \end{pmatrix}$$  and the unitary
\[
	\mathcal{W}(\omega) = \sum_{\nu \in \zz} \sum_{\tau = \pm 1}\delta_{\nu + \tau} \braket{\delta_\nu,{\cdot\,}} \otimes e_\tau \braket{e_\tau, C_\nu(\omega) {\,\cdot\,}} .
\]
on $\ell^2(\zz) \otimes \cc^2$. This is the form of quantum walk discussed in~\cite{JM10} (also see~\cite{Joy04,ASW}); it arises as the general form (up to unitary equivalence) of a disordered quantum walk on~$\zz$ where the quantum amplitudes of the transitions the right and to the left are independent random variables and the quantum transition probabilities between neighbouring sites are deterministic and independent of the site.

Note that
\begin{equation}\label{eq:covar}
	(S_\sys \otimes \one)\mathcal{W}(\omega)(S_\sys^* \otimes \one) = \mathcal{W}(\phi\omega)
\end{equation}
where $\phi$ is the shift $(\omega^+_\nu, \omega^-_\nu)_{\nu\in\zz} \mapsto (\omega^+_{\nu+1}, \omega^-_{\nu+1})_{\nu\in\zz}$ on~$\Omega$ and~$S_\sys^*$ is the periodic shift on~$\cc^n$.

The corresponding walk on a ring of $n$ sites has one-particle dynamics prescribed by the unitary
\begin{align*}
	W^{(n)}(\omega) &= \delta_{n-1} \braket{\delta_0,{\cdot\,}} e_{-1}\braket{e_{-1},C_0(\omega) {\,\cdot\,}}  + \delta_{1} \braket{\delta_0,{\cdot\,}} e_{+1}\braket{e_{+1},C_0(\omega) {\,\cdot\,}} \\
		& \qquad +
	\sum_{\nu=1}^{n-2} \sum_{\tau = \pm 1}\delta_{\nu + \tau} \braket{\delta_\nu,{\cdot\,}} \otimes e_\tau \braket{e_\tau, C_\nu(\omega) {\,\cdot\,}} \\
		&\qquad +
		\delta_{n-2} \braket{\delta_{n-1},{\cdot\,}} e_{-1}\braket{e_{-1},C_{n-1}(\omega) {\,\cdot\,}}
    \delta_{0} \braket{\delta_{n-1},{\cdot\,}} e_{+1}\braket{e_{+1},C_{n-1}(\omega) {\,\cdot\,}}.
\end{align*}
 on~$\ell^2(\{0,1,\dotsc,n-1\})\otimes\cc^2$. If we chose the state~$\psi_* = \delta_0 \otimes e_{-1}$ for the coupling with the environment~$\env$, we have to consider the random contraction~$\MM^{(n)}(\omega)$ defined by $\MM^{(n)} = (\one + (\cos\alpha-1)(\delta_0 \otimes e_{-1})\braket{\delta_0 \otimes e_{-1},{\cdot\,}})W^{(n)}$.

Let us suppose for simplicity that the support of~$\mu$ is a small nondegenerate interval. Let us also suppose that the series~\eqref{eq:sigma-plus} terminates after finitely many terms. Adapting slightly the arguments of~\cite{Joy04} and~\cite{ASW}, we see that since $\mathcal{W}$ is a band unitary matrix satisfying~\eqref{eq:covar} and since $\MM^{(n)}$ is a rank-six perturbation of $(\one_{[0,n-1]} \otimes \one) \mathcal{W} (\one_{[0,n-1]} \otimes \one)$, it follows from Birkhoff's ergodic theorem that
\[
	\lim_{n \to \infty} \frac{1}{2n} \tr(f (\MM^{(n)}(\omega))) = \lim_{n\to\infty}\frac{1}{2n}\tr((\one_{[0,n-1]} \otimes \one) f (\mathcal{W}(\omega)) (\one_{[0,n-1]} \otimes \one))
\]
for any polynomial~$f $ and for $\mu^{\times \zz}$-almost all~$\omega \in \Omega$. The right-hand side is usually written as the integral
\[
 \int_{\ci} f (\Exp{\ii\theta}) \d k(\theta)
\]
of~$f$ against the density of state~$k$ for~$\mathcal{W}$, defined through the Riesz--Markov representation theorem and which is an almost sure quantity in the sense that it is the same for $\mu^{\times \zz}$-almost all~$\omega \in \Omega$. In particular, this gives us, for large~$n$, an approximation of the asymptotic (in time) averaged (over the vertices of the graph) particle number density:
\[
	\lim_{t \to\infty} \frac{1}{n} \sum_{\nu=0}^{n-1} p_t(\nu) = \frac{1}{n} \tr(2 \Re \fonc (\MM^{(n)})) = 2 \int_{\ci} 2 \Re\fonc (\Exp{\ii\theta}) \d k(\theta) + o(1)
\]
as $n \to \infty$.

The techniques of~\cite{Joy04,ABJ} give us detailed information on the support of the density of states~$k$ for~$\mathcal{W}$. Indeed, the spectrum of the operator for $\omega$ identically 0 is made of the two bands
\[
	\Lambda_\pm = \{ x \pm \ii\sqrt{1-x^2}: x \in [-|t|,|t|]\}.
\]
Hence, a standard perturbation argument yields that for~$\mu$ supported on a small enough interval,
\[
	\operatorname{sp}(\mathcal{W}(\omega)) \subseteq \Big(\bigcup_{\theta \in \supp\mu} \Exp{\ii\theta} \Lambda_+\Big) \cup \Big(\bigcup_{\theta \in \supp\mu} \Exp{\ii\theta} \Lambda_-\Big)
\]
almost surely.

Therefore, by tuning $t$ and $\mu$ and taking $n$ large enough, one can bring the asymptotic average density $\lim_{t \to\infty} \frac{1}{n} \sum_{\nu=0}^{n-1} p_t(\nu)$ arbitrarily close to any value in the essential range of the function~$\ci \ni z \mapsto 2\Re F(z) \in \rr$.

\appendix
\section{Comments on the statistics}\label{sec:stats}

Following~\cite{HJ17}, we have made the choice of considering different species of fermions for the sample~$\sys$ and the environment~$\env$.
Considering the same species for both components of the system would have amounted to imposing the anticommutation relation~$\{a^\sharp(\psi),\tilde b(\varphi)\} = 0$ for all~$\psi \in \h_\sys$ and $\varphi \in \h_\env$ instead of the commutation relation~$[a^\sharp(\psi),b(\varphi)] = 0$.
This is realized on the Fock space~$\Ga(\h_\env) \otimes \Ga(\h_\sys)$ by setting~$\tilde b(\varphi) := b(\varphi) \otimes (-1)^{\dG(\one)}$.
In this case, one finds with the same techniques formulae such as
\[
	\tilde K_\alpha^* a^*(\psi) \tilde K_\alpha = a^*((\one + (\cos\alpha - 1)P)\psi) +\ii\sin\alpha \, \tilde b^*(\iota\psi),
\]
leading to
the same formulae as in Lemma~\ref{lem:one-tau}. Therefore, the asymptotics of the state in the sample~$\sys$ and the fluxes are the same.

With this choice of statistics, one may alternatively view~$\tilde K_\alpha$ as arising from the second quantization of a one-body operator on~$\h_\env \oplus \h_\sys$:
\[
	\tilde K_\alpha
    = \mathcal{U}\G(\one + (\cos\alpha-1)(\iota^*\iota + \iota\iota^*) - \ii\sin\alpha\,(\iota^* + \iota)) \mathcal{U}^*,
\]
where $\mathcal{U} : \Ga(\h_\env \oplus \h_\sys) \to \Ga(\h_\env) \otimes \Ga(\h_\sys)$ is the usual fermionic exponential map; see for example~\cite[\S{5.1}]{AJPP}.
The dynamics implemented by the unitary $\G((S \otimes U \oplus W)\Exp{-\ii\alpha(\iota+\iota^*)})$ gives rise to a quasi-free dynamics and the corresponding one-particle M{\o}ller operator
\[
	\Omega_+ = \slim\limits_{t \to \infty}(S \otimes U \oplus W)^{t}((S \otimes U \oplus W)\Exp{-\ii\alpha(\iota+\iota^*)})^{-t}
\]
exists and satisfies
\[
	\Omega_+ (0 \oplus \one) = \ii\sin\alpha \sum_{t'=0}^{\infty} (S\otimes U)^{t'+1}\iota W^* ((\one + (\cos\alpha-1)\iota^*\iota W^*)^{t'}.
\]
In particular, one quickly recovers
\[
	(0 \oplus \one) \Omega_+^* (\Sigma \oplus \Xi) \Omega_+ (0 \oplus \one) = 0 \oplus \Delta
\]
for all $\Xi \in \B(\h_\sys)$, showing that\,---\,at least when the initial state in the sample is a \textsc{giqf} state associated to a density~$\Xi$ invariant for the free dynamics\,---\,the limiting state is the same as in the case previously considered. This reduction to a one-body problem also suggests the same behaviour for Bose statistics.

\bibliographystyle{amsalpha}
\bibliography{walk-bib}

\end{document}